\documentclass[lettersize,journal]{IEEEtran}
\usepackage{amsfonts}
\usepackage{amsmath}  
\usepackage{amssymb}  
\usepackage{amsthm} 
\usepackage{algorithmic}
\usepackage{algorithm}
\usepackage{array}
\usepackage[caption=false,font=normalsize,labelfont=sf,textfont=sf]{subfig}
\usepackage{textcomp}
\usepackage{stfloats}
\usepackage{url}
\usepackage{verbatim}
\usepackage{graphicx}
\usepackage{xcolor}
\usepackage{cite}
\usepackage{booktabs}
\usepackage{multirow}
\usepackage{hyperref}
\usepackage{verbatim}
\usepackage{graphicx}   
\usepackage{caption}    
\usepackage{subcaption} 
\usepackage{enumitem}

\usepackage[dvipsnames,svgnames,x11names]{xcolor}
\usepackage{tikz}
\usepackage{moresize}
\usepackage{pgfplots}
\usepackage{wrapfig}
\pgfplotsset{compat=1.17}
\pgfplotstableset{col sep=comma}
\pgfplotsset{compat=1.17}
\pgfplotstableset{col sep=comma}
\tikzset{every mark/.append style={scale=1.5, solid}, font=\footnotesize}
\pgfplotsset{
    width=1.05\textwidth,
    tick label style={font=\small},
    label style={font=\small},
    legend style={
        font=\footnotesize ,  
        inner xsep=1pt,
        inner ysep=1pt,
        nodes={inner sep=1pt}},
    legend cell align=left,
	every axis/.append style={line width=0.5pt},
	every axis plot/.append style={line width=1.25pt},
    every axis y label/.append style={yshift=-5pt}
}

\theoremstyle{definition} 

\newtheorem{mydefinition}{\bf Definition}
\newtheorem{myproposition}{\bf Proposition}

\usepackage{stmaryrd}
\theoremstyle{remarkstyle}
\newtheorem{remark}{Remark}

\input{mysymbol.sty}

\def \MSE {\mathrm{MSE}}

\newcommand{\alna}[1]{\begin{alignat}{3}&#1&\end{alignat}}
\def \bi {\begin{itemize} \item}
\def \ei {\end{itemize}}
\def \im {\item}

\definecolor{clrsamp}{HTML}{000000}
\definecolor{clrcorr}{HTML}{5f5695}
\definecolor{clrperi}{HTML}{47407d}
\definecolor{clrmaspat}{HTML}{8faeff}
\definecolor{clrmaspec}{HTML}{4671df}
\definecolor{clrarspat}{HTML}{e865a5}
\definecolor{clrarspec}{HTML}{ba1968}
\definecolor{clrwiener}{HTML}{e19b00}
\definecolor{clrnoisy}{HTML}{785ef0}

\definecolor{clra0}{HTML}{47407d}
\definecolor{clra0001}{HTML}{760039}
\definecolor{clra001}{HTML}{ba1968}
\definecolor{clra01}{HTML}{dc267f}
\definecolor{clra1}{HTML}{f3a3ca}

\newtheorem{theorem}{Theorem}
\newtheorem{corollary}[theorem]{Corollary}

\begin{document}

\title{Stationarity and Spectral Characterization of Random Signals on Simplicial Complexes}

\author{
    Madeline Navarro,~\IEEEmembership{Student Member,~IEEE,} 
    Andrei Buciulea,~\IEEEmembership{Member,~IEEE,} 
    Santiago Segarra,~\IEEEmembership{Senior Member,~IEEE,} 
    and Antonio G. Marques,~\IEEEmembership{Senior Member,~IEEE}

    \thanks{
        Madeline Navarro and Santiago Segarra are with the ECE Department, Rice University. Andrei Buciulea and Antonio G. Marques are with the Department of Signal Theory and Communications, King Juan Carlos University. 
        
        This paper is supported by the Spanish AEI (AEI/10.13039/501100011033) grant PID2023-149457OB-I00, and the Community of Madrid via IDEA-CM (TEC-2024/COM-89) and the Ellis Madrid Unit.
    }
}


\maketitle

\begin{abstract}
It is increasingly common for data to possess intricate structure, necessitating new models and analytical tools.
Graphs, a prominent type of structure, can encode the relationships between any two entities (nodes).
However, graphs neither allow connections that are not dyadic nor permit relationships between sets of nodes.
We thus turn to \emph{simplicial complexes} for connecting more than two nodes as well as modeling relationships between simplices, such as edges and triangles.
Our data then consist of signals lying on \emph{topological spaces}, represented by simplicial complexes.
Much recent work explores these topological signals, albeit primarily through deterministic formulations.
We propose a \emph{probabilistic} framework for random signals defined on simplicial complexes.
Specifically, we generalize the classical notion of \emph{stationarity}.
By spectral dualities of Hodge and Dirac theory, we define stationary topological signals as the outputs of topological filters given white noise.
This definition naturally extends desirable properties of stationarity that hold for both time-series and graph signals.
Crucially, we properly define topological \emph{power spectral density} (PSD) through a clear spectral characterization.
We then discuss the advantages of topological stationarity due to spectral properties via the PSD.
In addition, we empirically demonstrate the practicality of these benefits through multiple synthetic and real-world simulations.
\end{abstract}

\begin{IEEEkeywords}
Random modeling of topological signals, simplicial signal processing, stationarity, topological signal processing.
\end{IEEEkeywords}

\section{Introduction}

As data science grows progressively more complex, modern tasks increasingly involve data residing on irregular domains.
While graphs are popular models for complex systems, their limitation to pairwise interactions is often insufficient, motivating higher-order connections that relate multiple entities.
When polyadic relationships are also associated with data, we can exploit simplicial complexes as concise yet expressive domains for signals supported on nodes, edges, triangles, and so on~\cite{bick2023higher, salnikov2018simplicial}.
Recent years have seen the development of a rich set of deterministic tools for topological (simplicial) signals~\cite{barbarossa2020topological, schaub2021signal}.
However, from biological modeling to traffic, finance, and network inference, practical applications typically do not come with perfectly measured data~\cite{isufi2024topological}.
Existing deterministic tools may fail to account for uncertainties arising from noise, missing values, or model discrepancies.
It behooves us to develop tractable \emph{probabilistic} models and inference tools to reliably draw conclusions from high-dimensional data.

As a natural starting point, we focus on \emph{stationarity}, the foundation of spectral analysis and estimation in time series and random fields~\cite{hayes2009statistical,stoica2005spectral}.
Stationarity has also been well established for signals in the graph domain~\cite{girault2015stationary,girault2015translation,EPFL16stationary,marques2017stationary}.
For example, past works define weak stationarity for random graph signals analogously to time series~\cite{sandrymouraspg_tsp14freq,segarra2017optimal}.
As in the time domain, these definitions of graph stationarity result in convenient simplifications of spectral behavior, leading to the generalization of essential definitions in signal processing (SP), such as the power spectral density (PSD) or spectral convolutions.

Here, we consider signals that reside on simplicial complexes, which subsume graph signals yet permit multiway interactions~\cite{bick2023higher,salnikov2018simplicial}.
Through the seminal Hodge decomposition~\cite{barbarossa2020topological,lim2020hodge,schaub2021signal,yang2022simplicial,isufi2022convolutional}, a simplicial signal of any given order can be decomposed into gradient, curl, and harmonic components.
Not only is this algebraically convenient, but it also reflects structure observed in real data, such as the nearly divergence-free nature of flows in transportation networks or arbitrage-free currency exchanges possessing curl-free edge signals~\cite{yang2022simplicial,schaub2018flow,jia2019graph,liu2023unrolling}.
Moreover, when data exist in multiple dimensions, such as signals on both nodes and edges, we need not process them separately; the Dirac operator couples signals across orders, allowing \emph{multiorder} spectral processing~\cite{bianconi2021topological,baccini2022weighted}.
Both Hodge and Dirac operations have led to the topological extension of many classical concepts, including convolutional architectures, trend filtering, neural networks, autoregressive (AR) modeling, recovery methods, random walks, and Gaussian models~\cite{isufi2024topological,barbarossa2020topological,schaub2021signal,yang2022simplicial,yang2022simplicialtrend,yang2022simplicialconvolutional,battiloro2024generalized,krishnan2023simplicial,reddy2024recovery,schaub2020random,marinucci2025simplicial}.

Despite this progress, probabilistic modeling for \emph{random} simplicial signals is still nascent.
Some well-known probabilistic models have seen adaptations to simplicial structures, such as diffusion models and Gaussian processes~\cite{yang2025topological,yang2023hodge}
However, we require more general stochastic frameworks that are computationally tractable for uncertainty modeling, spectral analysis, and statistically sound inference across simplicial orders.
In particular, we seek models that (i) respect Hodge/Dirac spectral geometry, (ii) provide clear spectral characterizations, and (iii) lead to principled estimators and filters suitable for practical setups, such as data-scarce settings.

\noindent \textbf{Our approach.} 
Motivated by the graph setting and properties of Hodge or Dirac-based topological spectra, we develop a probabilistic framework for \emph{stationary topological random signals}.
In Section~\ref{S:StatDefinition}, we define a stationary topological process as the output of a topological filter given white noise.
We also show convenient characterizations of topological signal behavior through the spectral properties of the simplicial complex, either for a single, given order or in multiorder scenarios.
Our definitions are mathematically straightforward, aligning with stationarity for graphs or time-series, yet we provide richer modeling capacity through the multiway relationships permitted by simplicial complexes.

More specifically, we make the following contributions:
\begin{itemize}[left= 2pt .. 10pt, noitemsep]
\item[1)] We introduce a definition of \emph{topological stationarity} for random simplicial signals and derive the associated PSD.
\item[2)] We develop and analyze both nonparametric and parametric methods to estimate either the PSD or covariance of stationary random topological signals.
\item[3)] We generalize classical statistical definitions and tools (including the Wiener filter)  to the topological domain and empirically illustrate their use in canonical SP tasks such as signal denoising.
\end{itemize}

\noindent\textbf{Organization.}
We first review preliminaries on simplicial complexes, Hodge and Dirac operators, and spectral decompositions in Sec.~\ref{S:prelim}.
Then, in Sec.~\ref{S:StatDefinition}, we formalize stationarity for topological signals and relate it to prior formulations.
In Sec.~\ref{S:estimators}, we present multiple estimators for the covariance matrix and PSD of stationary topological signals, while in Sec.~\ref{S:problem}, we demonstrate how to generalize classical tools such as the Wiener filter to the topological domain.
Sec.~\ref{S:exps} reports numerical experiments validating the use of these tools for estimating the covariance and PSD of topological signals under stationarity.
Finally, we conclude the paper in Sec.~\ref{S:conclusion}.

\section{Topological SP Preliminaries} \label{S:prelim}

We begin with a brief overview of the fundamentals of topological SP to lay the groundwork for stationary simplicial signals, defined in the sequel.
A more detailed version of this review can be found in \cite{liu2025matched}.

\begin{figure}[t]
    \centering
    \vspace{-.2cm}
        \hspace{-0.5cm}
        \begin{minipage}[t]{0.5\linewidth}
            \centering
            \includegraphics[width=40mm]{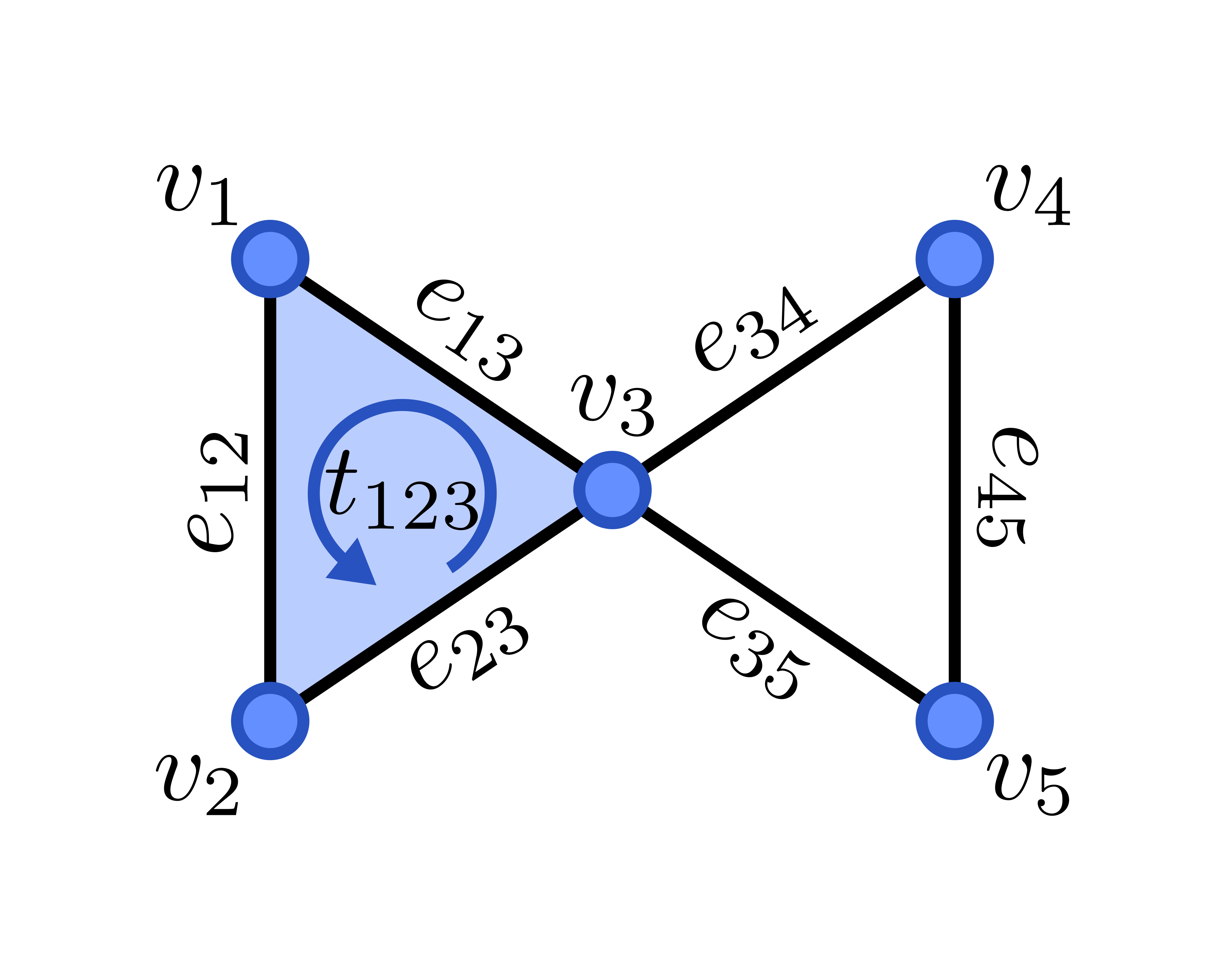}
            \centerline{\vspace{-.2cm}\small(a) $2$-order example.}
        \end{minipage}%
        \hspace{-.0cm}
        \begin{minipage}[t]{0.5\linewidth}
            \centering
            \includegraphics[width=40mm]{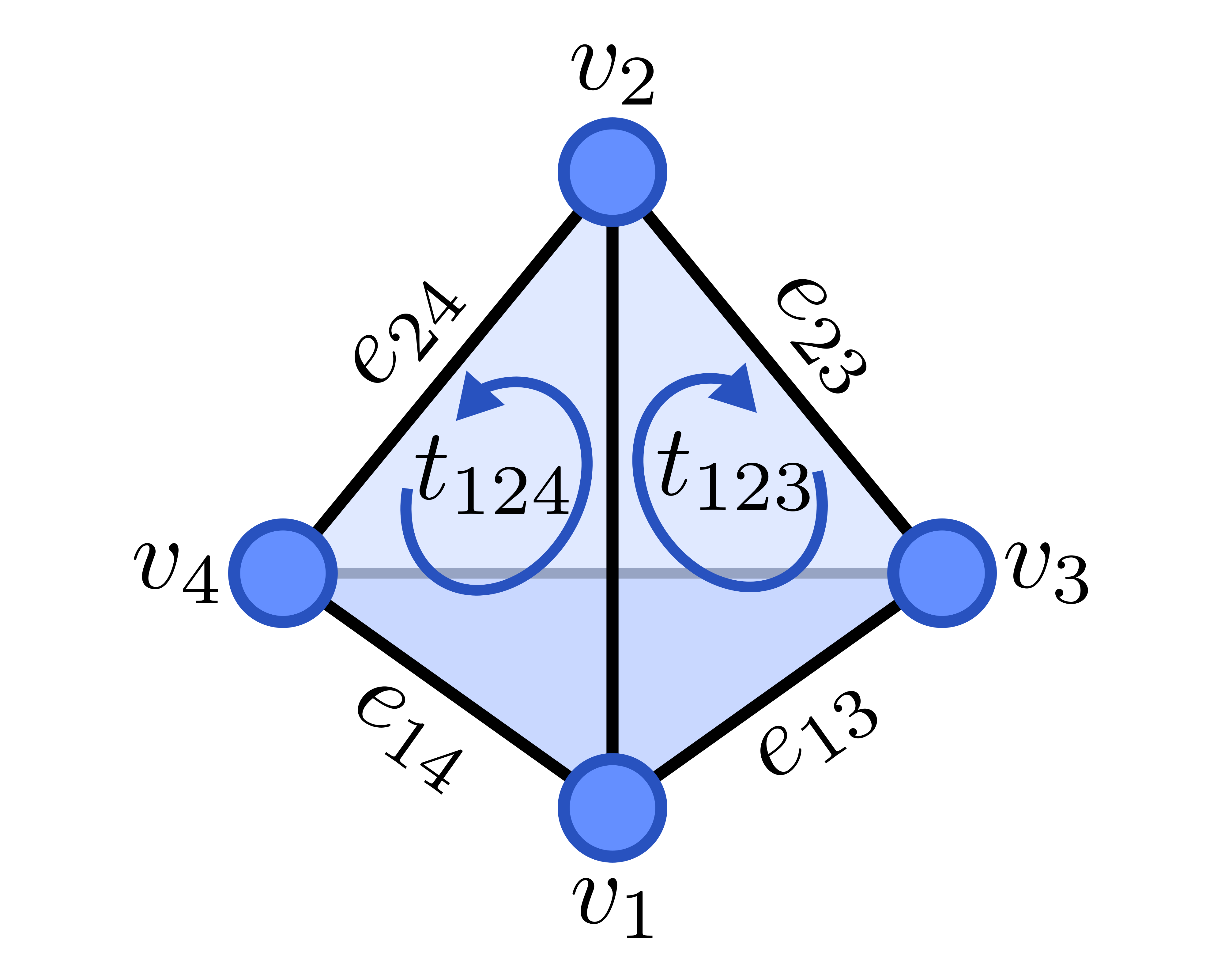}
            \centerline{\vspace{-.2cm}\small(b) $3$-order example.}
        \end{minipage}
        \hspace{-.2cm}
    \caption{Examples of simplicial complexes of different orders.
    (a) A simplicial complex of order $2$ with one triangle $\{1,2,3\}$. 
    The edge $\{1,2\}$ is \emph{aligned} with $\{1,2,3\}$ but $\{1,3\}$ is \emph{anti-aligned}.
    (b) A simplicial complex of order $2$ with one tetrahedron $\{1,2,3,4\}$.
    The triangle $\{1,2,4\}$ is \emph{aligned} with $\{1,2,3,4\}$ but $\{1,2,3\}$ is \emph{anti-aligned}.
    }
    \vspace{-.3cm}
    \label{fig:sc_example}
\end{figure}

\subsection{Simplicial Complexes and Simplicial Signals}

A simplicial complex consists of multiway relationships among a set $\ccalV$ of $N_0$ vertices.
In particular, the simplicial complex $\ccalP^K$ of order $K \leq N_0$ contains simplices, that is, subsets of $\ccalV$, where the cardinality of the largest simplex in $\ccalP^K$ is $K+1$.
Moreover, if $\ccalW \in \ccalP^K$ contains $k+1$ vertices, then $\ccalW$ is a $k$-simplex.
We may write the set of $k$-simplices in $\ccalP^K$ as $\{ \ccalW^k_n \}_{n=1}^{N_k}$, $N_k$ is the number of $k$-simplices, where the total number of simplices is $N = \sum_{k=1}^K N_k$.
Thus, we may define $\ccalP^K$ as the collection $\{ \ccalW^0_n \}_{n=1}^{N_0}$, $\{ \ccalW^1_n \}_{n=1}^{N_1}$, \dots, $\{ \ccalW^K_n \}_{n=1}^{N_K}$, satisfying the \emph{inclusion property}: For every simplex $\ccalW \in \ccalP^K$, all of its subsets are also in $\ccalP^K$, that is, $\ccalW' \in \ccalP^K$ for all $\ccalW' \subset \ccalW$.
For a given $k$-simplex $\ccalW^k$, the $(k-1)$-simplex $\ccalW^{k-1}$ is a \emph{face} of $\ccalW^k$ if $\ccalW^{k-1} \subset \ccalW^k$, while the $(k+1)$-simplex $\ccalW^{k+1}$ is a \emph{coface} of $\ccalW^k$ if $\ccalW^{k+1} \supset \ccalW^k$.
While a seemingly abstract definition, the simplicial complex allows intuitive representations of connected data in Euclidean space.
Fig.~\ref{fig:sc_example}a exemplifies a simplicial complex of order $2$, where $0$-simplices denote vertices, $1$-simplices edges, and $2$-simplices triangles, while Fig.~\ref{fig:sc_example}b shows a simplicial complex of order $3$ with a $3$-simplex as a tetrahedron.

We next introduce operators that relate signals across adjacent dimensions.
To this end, we require reference orientations of any simplex in $\ccalP^K$.
We let the canonical representation of any simplex $\ccalW \subseteq \ccalV$ in $\ccalP^k$ be the subset of $\ccalV$ ordered by vertex indices.
For example, $\{ 1,2,4 \}$, $\{2,1,4\}$, and $\{ 2,4,1 \}$ all correspond to the same $2$-simplex, canonically represented as $\{1,2,4\}$.
Observe that $\{2,1,4\}$ is anti-aligned with $\{1,2,4\}$ since the directional edge flow $2 \,{\small \rightarrow}\, 1 \,{\small \rightarrow}\, 4 \,{\small \rightarrow}\, 2$ opposes the canonical $1 \,{\small \rightarrow}\, 2 \,{\small \rightarrow}\, 4 \,{\small \rightarrow}\, 1$.
We can then introduce incidence matrices $\bbB_k \in \{ -1, 0, 1 \}^{ N_{k-1} \times N_k }$ for all $k = 1,\dots,K$, which link simplices of adjacent orders $k-1$ and $k$~\cite{barbarossa2020topological}.
If $[\bbB_k]_{nm} \neq 0$, then $\ccalW_m^k \in \ccalP^K$ exists and $\ccalW^{k-1}_n \subset \ccalW_m^k$, and the sign of $[\bbB_k]_{nm}$ denotes if $\ccalW^{k-1}_n$ and $\ccalW^k_m$ are aligned or anti-aligned with respect to their reference orientations, which is demonstrated in Fig.~\ref{fig:sc_example}.
Using these matrices, we introduce \emph{Hodge Laplacians} for each dimension~\cite{lim2020hodge}
\begin{equation}\label{eq:hodge_lap}
\begin{aligned}
	& \mathbf{L}_0=\mathbf{B}_1 \mathbf{B}_1^\top,~\mathbf{L}_K=\mathbf{B}_K^\top \mathbf{B}_K,~\text{and} \\
	& \mathbf{L}_k=\mathbf{L}_{k,\ell}+\mathbf{L}_{k,u}~\text{for}~k=1, \ldots, K-1,\\
	&\hspace{1cm}\text{with}~\mathbf{L}_{k,\ell}=\mathbf{B}_k^\top \mathbf{B}_k,~\text{and}~\mathbf{L}_{k,u}=\mathbf{B}_{k+1} \mathbf{B}_{k+1}^\top.
	\quad  
\end{aligned}
\end{equation}
Each dimension $1 \leq k \leq K-1$ is associated with a lower Laplacian $\bbL_{k,\ell} := \bbB_k^\top \bbB_k$ and an upper Laplacian $\bbL_{k,u} := \bbB_{k+1} \bbB_{k+1}^\top$, which connect $k$-simplices via shared faces and cofaces, respectively.
The Hodge Laplacians in~\eqref{eq:hodge_lap} describe how simplices of a given order are connected, but we will also apply them as topological signal operators.

To this end, we must first introduce \emph{simplicial signals}, that is, real-valued data on simplicial complexes.
First, we consider data at a single order.
A $k$-simplicial signal (abbreviated as $k$-signal) is represented by a vector $\bbs_k = [s_1^k, \dots, s_{N_k}^k]^\top \in \reals^{N_k}$, with the entry $s_n^k$ denoting the value of the $n$-th $k$-simplex~\cite{barbarossa2020topological}.
The sign of $s_{n}^k$ indicates if the signal orientation is aligned or anti-aligned with the reference orientation of the $n$-th simplex, which we exemplify in Fig.~\ref{fig:sc_example}.
We then define a complete \emph{simplicial complex signal} as the concatenation of signals across all orders
\alna{
    \bbs = \big[
        \bbs^{0\top}, \dots, \bbs^{K\top}
    \big]^\top \in \reals^N.
\label{eq:simplicial_complexes_signal}}

\subsection{Hodge Decomposition}\label{Ss:hodge_decomp}

A critical fact of the incidence matrices is that $\bbB_k \bbB_{k+1} = \bbzero$~\cite{hatcher2002algebraic}.
Then, by the definition of the Hodge Laplacians in~\eqref{eq:hodge_lap}, we arrive at the \emph{Hodge decomposition}, which allows us to decompose simplicial signal spaces into three orthogonal yet meaningful subspaces.
By the orthogonality of $\operatorname{span}(\bbB_k^\top)$ and $\operatorname{span}(\bbB_{k+1})$, we can uniquely decompose any $k$-signal $\bbs^k$ as
\alna{
    \bbs^k = \bbs^k_{\rm G} + \bbs^k_{\rm C} + \bbs^k_{\rm H},
\label{eq:hodge_components}}
where $\bbs^k_{\rm G} \in \operatorname{span}(\bbB_k^\top)$, 
$\bbs^k_{\rm C} \in \operatorname{span}(\bbB_{k+1})$, and
$\bbs^k_{\rm H} \in \operatorname{kernel}(\bbL_{k})$
lie in mutually orthogonal subspaces corresponding to gradient, curl, and harmonic components, respectively~\cite{lim2020hodge}.
Intuitively, given some $k$-signal $\bbs$, we interpret each component as follows~\cite{yang2022simplicial}.
\begin{itemize}[left= 2pt .. 10pt, noitemsep]
\im {\bf Gradient:}
    The transformation $\bbB_{k+1}^\top \bbs$ maps values in $\bbs$ to the $(k+1)$-cofaces of all $k$-simplices, representing the circulation of $k$-signal values around $(k+1)$-simplices, inducing a discrete curl operation.
\im {\bf Curl:}
    We map $\bbs$ to the $(k-1)$-faces of each $k$-simplex via $\bbB_k\bbs$, 
    This sums $k$-signal values at each $(k-1)$-simplex, denoting the net flow and therefore inducing a discrete divergence operation.
\im {\bf Harmonic:}
    Any signal $\bbs \in \operatorname{kernel}(\bbB_{k+1}^\top) \cap \operatorname{kernel}(\bbB_k)$ is curl-free and divergence-free and therefore does not contribute to either circulation around a $(k+1)$-simplex or net flow through a $(k-1)$-simplex.
\end{itemize}
Thus, we may write the gradient and curl components as 
\alna{
    \bbs_{\rm G}^k = \bbB_k^\top \bbx^{k-1},
    \qquad
    \bbs_{\rm C}^k = \bbB_{k+1} \bbx^{k+1}
\nonumber}
for some $(k-1)$-signal $\bbx^{k-1}$ and $(k+1)$-signal $\bbx^{k+1}$.
Thus, $\bbs_{\rm G}^k$ is driven by lower-level connections, that is, $(k-1)$-faces and $\bbs_{\rm C}^k$ by upper-level connections, or $(k+1)$-cofaces, while the harmonic component $\bbs_{\rm H}^k$ represents the remaining portion of $\bbs^k$ that is orthogonal to both.

Because each of the three components in~\eqref{eq:hodge_components} correspond to mutually orthogonal subspaces, we also enjoy a spectral characterization of signals based on the eigendecomposition of the Hodge Laplacian $\bbL_k$.
We let $\bbU_k$ denote the eigenvectors of $\bbL_k$, which we further decompose as $\bbU_k = [ \bbU_{k, {\rm G}}, \bbU_{k, {\rm C}}, \bbU_{k, {\rm H}} ]$, where $\bbU_{k,{\rm G}}$ and $\bbU_{k, {\rm C}}$ denote the eigenvectors associated with nonzero eigenvalues of $\bbL_{k,\ell}$ and $\bbL_{k,u}$, respectively, while $\bbU_{k,{\rm H}}$ span the nullspace of $\bbL_k$.
We can therefore equivalently write the $k$-signal $\bbs^k$ as
\alna{
    \bbs^k = 
    \bbU_{k, {\rm G}} \tbs^k_{\rm G}
    +
    \bbU_{k, {\rm C}} \tbs^k_{\rm C}
    +
    \bbU_{k, {\rm H}} \tbs^k_{\rm H},
\label{eq:hodge_tft}}
where the spectral components $\tbs^k_{\rm G} :=  \bbU_{k, {\rm G}}^\top \bbs^k$,
$\tbs^k_{\rm C} :=  \bbU_{k, {\rm C}}^\top \bbs^k$,
and
$\tbs^k_{\rm H} :=  \bbU_{k, {\rm H}}^\top \bbs^k$
comprise $\tbs^k := \bbU_k^\top \bbs^k$, which contains the topological frequency coefficients.
More specifically, $\bbU_k^\top \bbs^k$ yields the topological Fourier transform (TFT) of $\bbs^k$, which allows us to generalize frequency decompositions for signals on simplicial complexes~\cite{barbarossa2020topological, schaub2021signal}.

\subsection{Dirac Decomposition}\label{Ss:dirac_decomp}

The Hodge decomposition is a classical, well-founded approach to model the spectral structure of simplicial signals for a single, given order.
We can further apply the Hodge Laplacians to build operators to model signal interactions across different orders.
When we are interested in \emph{multiorder} signals, that is, simplicial signals observed over all orders simultaneously, we resort to the \emph{Dirac operator} $\bbD$~\cite{calmon2022higher,baccini2022weighted}, which
must be of size $N \times N$ to account for the $N = \sum_{k=1}^K N_k$ simplices across all orders of the simplicial complex $\ccalP^K$.
Additionally, $\bbD$ has a $K\times K$ block structure, where the $(k,k')$-th block, denoted by $\llbracket \bbD \rrbracket_{k,k'}$, relates orders $k$ and $k'$.
Formally, each block is defined as~\cite{calmon2022higher,baccini2022weighted}
\begin{equation}\label{eq:dirac}
    \llbracket{\bbD}\rrbracket_{k,k'}
    =\left\{
    \begin{array}{cc}
    \mathbf{B}_k & \text{if}\;k'=k+1\\
    \mathbf{B}_{k'}^\top & \text{if}\;k'=k-1\\
    \mathbf{0} & \text{otherwise}
    \end{array}
    \right\}.
\end{equation}
Observe that, since we only consider interactions between adjacent orders, most blocks of $\bbD$ are zero-valued except the super- and sub-diagonal blocks.
Applying $\bbD$ as an operator $\bby = \bbD\bbx$ maps the simplicial complex signal space to itself.
However, if we structure $\bbx$ and $\bby$ into block vectors conformal with $\bbD$, we may write the $k$-th subvector of $\bby$ as $\llbracket \bby \rrbracket_{k} = \bbB_k \llbracket\bbx\rrbracket_{k-1} + \bbB_{k+1}^\top \llbracket \bbx\rrbracket_{k+1}$, that is, $\bbD$ maps signal values to the $k$-signal $\llbracket \bby \rrbracket_{k}$ from adjacent dimensions $k\pm 1$.

An equally important property of $\bbD$ regards its spectral characteristics.
The square of the Dirac operator yields a block-diagonal matrix $\ccalL = \bbD^2$ comprising the Hodge Laplacians along the diagonal, that is, $\ccalL := \operatorname{blkdiag}( \{ \bbL_k \}_{k=0}^K )$.
Indeed, the operator $\bbD$ maps signals of orders $k\pm 1$ to $k$-signals, so two applications $\bbD^2$ should map $k$-signals to $k$-signals through both upper- and lower-level connections, that is, the upper and lower Laplacians in~\eqref{eq:hodge_lap}.
Recalling the orthogonality of $\operatorname{span}(\bbB_k^\top)$ and $\operatorname{span}(\bbB_{k+1})$ and the definition in~\eqref{eq:hodge_lap}, we can relate the eigenvectors of $\bbD$ to the singular vectors of the incidence matrices.
Moreover, with any even power of $\bbD$, we can recover the spectral representations of the Hodge decomposition in Section~\ref{Ss:hodge_decomp}.
Thus, we can apply the TFT to multiorder signals through the eigenvectors of $\bbD$, analogous to the Hodge-based TFT in~\eqref{eq:hodge_tft}.

\section{Topological Weak Stationarity} \label{S:StatDefinition}

We now turn to the crux of this work, the definition of stationary random topological signals.
In classical time-series analysis, a stochastic process is said to be \emph{weakly stationary} if its correlation matrix is invariant to time shifts~\cite{hayes2009statistical, stoica2005spectral}.
This implies two fundamental properties for time-stationary processes.
First, a signal is weakly time-stationary if and only if it can be written as the output of a linear time-invariant filter given white noise.
Second, the spectral properties of the correlation matrix of any time-stationary process completely describe its second-order statistics.
For our generalization to topological domains, we consider these two properties to define weak stationarity for simplicial complex signals.

However, unlike stationarity for the graph domain~\cite{EPFL16stationary, marques2017stationary}, we must also account for the multiorder nature of the domain.
Indeed, we may be interested in multiorder signals observed across all orders.
Even if we only required signals over a single order, the simplicial complex domain will reflect higher-order interactions among nodes, edges, triangles, and higher-dimensional simplices.
We therefore seek a definition of weak stationarity that is consistent with the spectral geometry induced by the Hodge Laplacians in Section~\ref{Ss:hodge_decomp} or the Dirac operator in~\ref{Ss:dirac_decomp}.
In particular, we define stationary topological random signals as those that can be modeled as the output of a \emph{topological filter} given white noise, where the filter is characterized by either $\bbL_k$ or $\bbD$, depending on our interest.

%
\begin{mydefinition}\label{D:WeaklyStationaryTopoProcess}
\normalfont
    For some simplicial complex $\ccalP^K$, let $\bbT \in \{ \bbL_k, \bbD \}$.
    A zero-mean random topological signal $\bbs$ is \emph{weakly stationary} with respect to $\bbT$ if it can be expressed as
    \begin{equation}\label{eq:topo_stationarity_def}
        \bbs = \bbH \bbw,
    \end{equation}
    where $\bbw$ is a white simplicial signal satisfying $\mbE[\bbw] = \bbzero$ and $\bbE[\bbw\bbw^\top] = \bbI$, and the \emph{topological filter} $\bbH$ commutes with $\bbT$.
\end{mydefinition}

%

Our definition of weak stationarity for simplicial signals is therefore characterized by linear shift-invariant filters.
However, our description of topological filters is more flexible than for the time or graph domain.
We consider linear filters that aggregate shifted versions of input signals over the underlying domain, where shift correspond to upper- or lower-level connections, that is, faces or cofaces
We may further choose between filtering a $k$-signal at a single order or a multiorder signal~\eqref{eq:simplicial_complexes_signal} on the joint simplicial space.
More specifically, let $\bbT \in \{ \bbL_k, \bbD \}$ denote a topological operator of interest.
We define a topological filter as a polynomial\footnote{While not presented here due to space limitations, generative filters that operate independently in the lower and upper Laplacians can also be considered~\cite{roddenberry2022hodgelets}.} of $\bbT$,
\alna{
    \bbH = \sum_{r = 0}^{R - 1} h_{r} \bbT^{r}
\label{eq:top_filter}}
for real-valued coefficients $\bbh := [ h_0,\dots,h_{R-1} ]^\top \in \reals^{R}$.
If $\bbT = \bbL_k$ for a given order $k$, we obtain a \emph{Hodge-domain} filter processing $k$-signals via upper- and lower-level connections.
Alternatively, for $\bbT = \bbD$, we process signals supported jointly on all simplicial orders, yielding a \emph{Dirac-domain} filter.
Furthermore, observe that both $\bbL_k$ and $\bbD$ are symmetric.
Thus, $\bbT$ is symmetric and therefore diagonalizable with eigendecomposition $\bbT = \bbU \diag(\bblambda) \bbU^\top$.
Then, the filter $\bbH$ has the spectral representation
\alna{
    \bbH = \bbU \diag\big( h(\bblambda) \big) \bbU^\top, 
    \quad
    h(\lambda_i) = \sum_{r=0}^{R-1} h_r \lambda_i^r,
\label{eq:top_filter_spectral}}
where $h(\cdot)$ is the \emph{topological frequency response} of the filter, which is applied elementwise to the eigenvalues $\bblambda$.

Recall that weak stationarity is associated with two properties, behavior as output of a filter given white noise and spectral properties of signal correlation.
Our Definition~\ref{D:WeaklyStationaryTopoProcess} is clearly inspired by the first property.
We therefore next introduce the covariance matrix of stationary topological signals to consider the second property.
By Definition~\ref{D:WeaklyStationaryTopoProcess}, we can express the covariance of a stationary topological signal $\bbs$ as
\alna{
    \bbC 
    &~:=~&
    \mbE[\bbs\bbs^\top]
    \,=\,
    \mbE[ (\bbH\bbw) (\bbH \bbw)^\top ]
    \,=\,
    \bbH \mbE[ \bbw\bbw^\top ] \bbH^\top
&\nonumber\\&
    &~=~&
    \bbH\bbH^\top
    \,=\,
    \bbH^2,
\label{eq:topo_covariance}}
where the last inequality is by the symmetry of $\bbH$, as $\bbT$ is symmetric.
Thus, the covariance is entirely determined by the filter $\bbH$, which itself depends on the choice of operator $\bbT$ and the filter coefficients $\bbh$.
Note that, since $\bbC = \bbH^2 = \bbU \diag^2\big(h(\bblambda)\big) \bbU^\top$, the rank of $\bbC$ is the same as that of $\bbH$.

This characterization of $\bbC$ leads to an almost equivalent and often more interpretable definition of weak stationarity, which we derive from the spectrum of the covariance $\bbC$.
In the time domain, the autocorrelation of a stationary process is diagonalized by the Fourier transform.
Analogously, the covariance of a stationary topological process is diagonalizable by the TFT via the eigenbasis of $\bbT \in \{ \bbL_k, \bbD \}$.
This is formalized in the following definition.

%
\begin{mydefinition}\label{D:TopoStationary_SimDiag}
\normalfont
A zero-mean random topological signal $\bbs$ is \emph{weakly stationary} with respect to $\bbT \in \{ \bbL_k, \bbD \}$ if the covariance $\bbC$ and $\bbT$ are \emph{simultaneously diagonalizable}, that is,
\alna{
    \bbC = \bbU \diag(\bbp) \bbU^\top,
\label{eq:topo_simdiag}}
where $\bbU$ denotes the shared eigenvectors of $\bbC$ and $\bbT$ and $\bbp \geq \bbzero$ contains the eigenvalues of $\bbC$, which we refer to as the \emph{power spectral density}.
\end{mydefinition}
Observe that Definition~\ref{D:TopoStationary_SimDiag} is slightly more general than Definition~\ref{D:WeaklyStationaryTopoProcess}.
To see this, recall that for any stationary signal $\bbs$ from~\eqref{eq:topo_stationarity_def}, its covariance matrix $\bbC$ takes the form in~\eqref{eq:topo_covariance}, which shares eigenvectors $\bbU$ with $\bbT$ and has eigenvalues $\diag^2\big( h(\bblambda) \big)$.
Thus, $\bbs$ also satisfies Definition~\ref{D:TopoStationary_SimDiag}.
However, there are no constraints on the eigenvalues of $\bbC$ in~\eqref{eq:topo_simdiag}, whereas Definition~\ref{D:WeaklyStationaryTopoProcess} implicitly requires that, for any two eigenvalues $\lambda_i$ and $\lambda_j$ of $\bbT$, if $\lambda_i = \lambda_j$, then $p_i = p_j$. 
Critically, Definition~\ref{D:TopoStationary_SimDiag} emphasizes the importance of the PSD $\bbp$ in characterizing the second-order statistics of stationary topological signals.
Given the eigenvectors of $\bbT$, the PSD completely describes the covariance matrix $\bbC$.
We formalize the definition of $\bbp$ as follows.

%
\begin{mydefinition}\label{def:topo_psd}
\normalfont
For a random topological signal $\bbs$ that is stationary with respect to $\bbT = \bbU \diag(\bblambda) \bbU^\top \in \{ \bbL_k, \bbD \}$, its \emph{power spectral density (PSD)} is defined as
\begin{equation}\label{eq:topo_psd}
    \bbp := \diag( \bbU^\top \bbC \bbU ).
\end{equation}
\end{mydefinition}

Given the eigenbasis $\bbU$, the vector $\bbp$ not only completely characterizes $\bbC$ but also serves as the direct counterpart of the classical PSD used for stationary time-series processes.
In fact, the PSD of stationary topological signals inherits the same key property as its time-domain counterpart, formally stated next.

\begin{corollary}\label{C:tft_cov}
Let $\bbU$ denote the eigenbasis of the underlying topological operator $\bbT \in \{ \bbL_k, \bbD \}$ for $k \in \{ 0,1,\cdots,K \}$ and consider the TFT of $\bbs$, that is, the spectral representation 
\alna{
    \tbs = \bbU^\top\bbs.
\label{eq:topo_freq_signal}}
Then, the covariance of $\tbs$ is diagonal and given by
\alna{
    \tbC = \mbE[\tbs\tbs^\top] = \diag(\bbp).
\label{eq:topo_freq_cov}}
\end{corollary}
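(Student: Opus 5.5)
The plan is a direct computation, using linearity of expectation and the spectral form of $\bbC$ from Definition~\ref{D:TopoStationary_SimDiag}.

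First, since $\bbs$ is zero-mean, $\tbs = \bbU^\top \bbs$ is also zero-mean. Hence $\mbE[\tbs\tbs^\top]$ is its covariance. Because $\bbU$ is deterministic, linearity of expectation gives
\begin{equation*}
\tbC = \mbE[\bbU^\top \bbs\bbs^\top \bbU] = \bbU^\top \mbE[\bbs\bbs^\top]\bbU = \bbU^\top \bbC \bbU .
\end{equation*}

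Second, I would invoke stationarity. Under Definition~\ref{D:TopoStationary_SimDiag}, $\bbC = \bbU\diag(\bbp)\bbU^\top$ with $\bbU$ the shared eigenbasis. Since $\bbT\in\{\bbL_k,\bbD\}$ is symmetric, $\bbU$ can be taken orthonormal, so $\bbU^\top\bbU = \bbI$. Substituting gives $\tbC = \bbU^\top\bbU\diag(\bbp)\bbU^\top\bbU = \diag(\bbp)$. If $\bbs$ instead satisfies Definition~\ref{D:WeaklyStationaryTopoProcess}, I would first use \eqref{eq:topo_covariance} together with \eqref{eq:top_filter_spectral} to write $\bbC = \bbH^2 = \bbU\diag^2(h(\bblambda))\bbU^\top$, which reduces to the same case with $\bbp = h^2(\bblambda)$.

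Third, for consistency with Definition~\ref{def:topo_psd}, note that $\bbU^\top\bbC\bbU$ is already diagonal. Its diagonal is therefore exactly $\bbp$ as defined in \eqref{eq:topo_psd}, and $\tbC = \diag(\bbp)$ follows.

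The only delicate point is repeated eigenvalues of $\bbT$, where the eigenbasis is not unique. The claim must then be read with $\bbU$ equal to the particular orthonormal eigenbasis that also diagonalizes $\bbC$, as Definition~\ref{D:TopoStationary_SimDiag} stipulates. Under Definition~\ref{D:WeaklyStationaryTopoProcess} this choice is harmless: $\bbH$ is a polynomial in $\bbT$, so it acts as a scalar on each eigenspace, and any orthonormal eigenbasis of $\bbT$ diagonalizes $\bbC$. I would state this remark explicitly; the rest is routine.
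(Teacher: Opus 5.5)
Your proposal is correct and follows the paper's proof. Both pull the deterministic $\bbU$ out of the expectation to get $\tbC = \bbU^\top\bbC\bbU$, then use the simultaneous diagonalization $\bbC = \bbU\diag(\bbp)\bbU^\top$ with orthonormal $\bbU$ to conclude $\tbC = \diag(\bbp)$. Your extra remarks make explicit what the paper's one-line computation leaves implicit: the reduction from Definition~\ref{D:WeaklyStationaryTopoProcess} via $\bbp = h^2(\bblambda)$, and the choice of $\bbU$ as an eigenbasis of $\bbT$ that also diagonalizes $\bbC$ when eigenvalues repeat.
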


\noindent\textbf{Proof.}
The result follows directly since
\alna{
    \mbE[\tbs\tbs^\top]
    =
    \bbU^\top \mbE[\tbs\tbs^\top] \bbU
    =
    \bbU^\top \bbC \bbU
    =
    \diag(\bbp).
&\nonumber\\[-1.23cm]&
\nonumber}
\hfill$\square$

\medskip

This corollary formalizes our claim that Definition~\ref{def:topo_psd} is analogous to the PSD of time-stationary processes.
More specifically, note that the covariance $\tbC$ is diagonal, indicating that stationary topological signals are \emph{uncorrelated in the spectral domain}, where the spectral representation of signals is defined by the eigenvectors of the topological operator $\bbT$.
This simplification already justifies the use of the spectral domain for signal analysis, but it also implies another general property of stationarity.
Because entries of $\bbp$ are uncorrelated, we need only consider the $i$-th entry $p_i$ to completely represent the power of the stationary process at the $i$-th topological frequency.
Thus, the PSD $\bbp$ in~\eqref{eq:topo_simdiag} can be seen as the distribution of energy of the random signal $\bbs$ across the topological frequencies defined by $\bbT$.

Furthermore, under the conditions of Definition~\ref{D:WeaklyStationaryTopoProcess}, we can generalize the classical relationship between linear shift-invariant filters and the PSD of stationary processes in the time and graph domain to the simplicial complex domain.
In particular, given a simplicial signal $\bbs = \bbH\bbw$ as in~\eqref{eq:topo_stationarity_def} and the covariance in~\eqref{eq:topo_covariance}, we can write $\bbp$ as
\alna{
    p_i = h(\lambda_i)^2
    \quad
    \forall ~ i \in [N] := \{1,\dots,N\},
\label{eq:topo_psd_filter_relation}}
therefore the PSD corresponds to the squared magnitude of the filter frequency response, as expected.

By~\eqref{eq:topo_psd} and~\eqref{eq:topo_psd_filter_relation}, we observe another advantage of stationarity in terms of probabilistic modeling. 
Given the topological operator $\bbT = \bbU \diag(\bblambda)\bbU^\top$, we require only $N$ degrees of freedom to describe the covariance $\bbC$ of the stationary signal $\bbs$ rather than $N (N+1) / 2$.
Moreover, if $R < N$ for filter coefficients $\bbh \in \reals^R$, we may further reduce the degrees of freedom by modeling the PSD as in~\eqref{eq:topo_psd_filter_relation}.
Indeed, such a reduction is crucial, particularly as $N (N+1) / 2$ can grow very large for high-dimensional topological signals, which we exploit in the next section for covariance estimation.

\section{Covariance Estimation} \label{S:estimators}

To describe the statistical relationships between simplices, we consider the task of estimating the covariance matrix $\bbC = \mbE[\bbs\bbs^\top]$ of a stationary topological signal $\bbs$ from a set of $M$ realizations $\bbS = [\bbs_1, \dots, \bbs_M] \in \reals^{N \times M}$.
The typical approach to recovering $\bbC$ is the sample estimate
\alna{
    \hbC 
    &~:=~&
    \frac{1}{M} \bbS \bbS^\top
    ~=~
    \frac{1}{M}
    \sum_{m=1}^M
    \bbs_m \bbs_m^\top,
\label{eq:sample_cov}}
which requires $\ccalO( MN^2 )$ operations to compute.
While~\eqref{eq:sample_cov} is straightforward, it does not exploit known structure in $\bbC$ for stationary signals introduced in Section~\ref{S:StatDefinition}.
Thus, we aim to exploit characterizations in~\eqref{eq:topo_psd} and~\eqref{eq:topo_psd_filter_relation} to develop more efficient estimators of both the PSD $\bbp$ and the covariance $\bbC$.
Recall that we may consider signals at a single order $k$ via the Hodge Laplacian $\bbL_k$ or a multiorder signal over all orders via the Dirac operator $\bbD$.
Thus, for the following estimators, we represent both scenarios with $\bbT \in \{ \bbL_k, \bbD \}$ with eigendecomposition $\bbT = \bbU\diag(\bblambda)\bbU^\top$, where $\bbU^\top$ further induces the TFT.

\subsection{Nonparametric Estimators}\label{Ss:nonparam_estimators}

We introduce two nonparametric approaches, one for which we apply signals in the spatial domain (correlogram-based) and another for which we consider spectral representations of signals (periodogram-based).

\subsubsection{Correlogram-based estimator}

Our first approach consists of two steps.
First, we estimate the PSD by transforming the sample covariance $\hbC$ to the spectral domain,
\alna{
    \hbp_{\rm cg}
    :=
    \diag( \bbU^\top \hbC \bbU ) \in \reals_+^N.
\label{eq:est_psd_cg}}
Our covariance estimate is then reconstructed from $\hbp_{\rm cg}$ as
\alna{
    \hbC_{\rm cg} := \bbU \diag(\hbp_{\rm cg}) \bbU^\top.
\label{eq:est_cov_cg}}
Thus, $\hbC_{\rm cg}$ can be seen as the projection of $\hbC$ onto the subspace of symmetric matrices that are diagonalized by $\bbU$.
In terms of computational complexity, note that we must obtain $\hbC$ for $\ccalO(MN^2)$, followed by two spectral-domain transformations, each of which costs $\ccalO( N^3 )$.

\subsubsection{Periodogram-based estimator}

Alternatively, we apply the TFT for a computationally lighter approach, where we first transform the signals to the spectral domain,
\alna{
    \tbS
    &~=~&
    \bbU^\top \bbS
    ~=~
    [ \tbs_1, \dots, \tbs_M ].
\nonumber}
We then estimate the PSD by averaging the squared magnitude of the transformed signals,
\alna{
[\hbp_{\rm pg}]_i
:=
\frac{1}{M}
\sum_{m=1}^M
[\tbs_m]_i^2
\quad
\forall ~
m \in [M],
\label{eq:est_psd_pg}}
or equivalently, $\hbp_{\rm pg} = \frac{1}{M} (\tbS \circ \tbS) \bbone$, where $\circ$ denotes the Hadamard (elementwise) product.
The covariance estimate is obtained as before,
\alna{
    \hbC_{\rm pg}
    &~:=~&
    \bbU \diag(\hbp_{\rm pg}) \bbU^\top.
\label{eq:est_cov_pg}}
Observe that obtaining $\hbp_{\rm cg}$ in~\eqref{eq:est_psd_cg} requires $\ccalO( \max(M,N) N^2 )$, while $\hbp_{\rm pg}$ in~\eqref{eq:est_psd_pg} requires $\ccalO( MN^2 )$. 
However, both approaches~\eqref{eq:est_psd_cg} and~\eqref{eq:est_psd_pg} yield equivalent estimates, which is straightforward to verify.
Therefore, in low-sample regimes with $M < N$, estimating the PSD via~\eqref{eq:est_psd_pg} is computationally preferable.

We next validate the value of our proposed, numerically equivalent approaches in~\eqref{eq:est_psd_cg} and~\eqref{eq:est_psd_pg} by analyzing their mean and variance~\cite{marques2017stationary}.
Since they are equivalent, we proceed by analyzing $\hbp_{\rm pg}$, but the following discussion also holds for $\hbp_{\rm cg}$.
In particular, the following proposition shows that $\hbp_{\rm pg}$ is an unbiased estimate with uncorrelated entries, the variances of which depend on the magnitude of entries in $\bbp$.

%
\begin{myproposition}\label{P:CovariancePeriodogram}\normalfont
Consider a topological signal $\bbs$ that is stationary with respect to $\bbT \in \{ \bbL_k,\bbD \}$ and has PSD $\bbp$.
Given the matrix $\bbS \in \reals^{N \times M}$ of $M$ independent realizations of $\bbs$, 
let $\hbp_{\rm pg}$ be the periodogram-based PSD estimate in~\eqref{eq:est_psd_pg}.
Then,
\begin{itemize}[left= 8pt .. 16pt, noitemsep]
\im[(i)] 
The bias $\bbb_{\rm pg}$ of $\hbp_{\rm pg}$ is given by
    \alna{
        \bbb_{\rm pg}
        &~:=~&
        \mbE[ \hbp_{\rm pg} ] - \bbp
        ~=~
        \bbzero;
    \label{E:BiasPeriodogram}}
\im[(ii)] 
If $\bbs$ is Gaussian, the covariance $\bbSigma_{\rm pg}$ of $\hbp_{\rm pg}$ is
    \alna{
        \bbSigma_{\rm pg}
        &~:=~&
        \mbE\big[ (\hbp_{\rm pg} - \bbp) 
        (\hbp_{\rm pg} - \bbp)^\top \big]
        \,=\,
        \frac{2}{M} 
        \diag^2(\bbp).
    \label{E:CovariancePeriodogram}}
\end{itemize}
\end{myproposition}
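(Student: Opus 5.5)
The plan is to work entirely in the spectral domain and use Corollary~\ref{C:tft_cov}. Write $\tbs_m = \bbU^\top \bbs_m$ for each realization. Since the $\bbs_m$ are independent copies of $\bbs$, the $\tbs_m$ are independent copies of $\tbs$. By Corollary~\ref{C:tft_cov}, each $\tbs_m$ is zero mean with $\mbE[\tbs_m\tbs_m^\top] = \diag(\bbp)$. In particular, $\mbE\big[[\tbs_m]_i^2\big] = p_i$ and $\mbE\big[[\tbs_m]_i[\tbs_m]_j\big] = 0$ for $i\neq j$.

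For part (i), linearity of expectation applied to~\eqref{eq:est_psd_pg} gives
\begin{equation*}
\mbE\big[[\hbp_{\rm pg}]_i\big] = \frac{1}{M}\sum_{m=1}^M \mbE\big[[\tbs_m]_i^2\big] = p_i
\end{equation*}
for every $i$. Hence $\bbb_{\rm pg}=\bbzero$, and no distributional assumption is needed.

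For part (ii), I would first note that if $\bbs$ is Gaussian, then $\tbs=\bbU^\top\bbs$ is Gaussian because it is a linear transformation of $\bbs$. Its covariance is the diagonal matrix $\diag(\bbp)$, so its entries are jointly Gaussian and uncorrelated, and therefore mutually independent. This step carries the main weight: uncorrelatedness alone would not give independence of $[\tbs_m]_i^2$ and $[\tbs_m]_j^2$, so the Gaussian hypothesis is exactly what is needed.

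Now compute the entries of $\bbSigma_{\rm pg}$. Since $[\hbp_{\rm pg}]_i - p_i = \frac{1}{M}\sum_m\big([\tbs_m]_i^2 - p_i\big)$, the $(i,j)$ entry is
\begin{equation*}
\frac{1}{M^2}\sum_{m,m'}\operatorname{Cov}\big([\tbs_m]_i^2,[\tbs_{m'}]_j^2\big).
\end{equation*}
Terms with $m\neq m'$ vanish by independence of realizations. For $m=m'$ and $i\neq j$, the terms vanish by the independence of the spectral coefficients established above, so $\bbSigma_{\rm pg}$ is diagonal. For $m=m'$ and $i=j$, use the Gaussian fourth moment $\mbE[x^4]=3\sigma^4$ for $x\sim\ccalN(0,\sigma^2)$. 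This gives $\operatorname{Var}([\tbs_m]_i^2) = 3p_i^2 - p_i^2 = 2p_i^2$. Summing the $M$ identical terms and dividing by $M^2$ yields $\frac{2}{M}p_i^2$, which establishes~\eqref{E:CovariancePeriodogram}.

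The only point that needs care is the independence argument for the off-diagonal entries. Alternatively, Isserlis' theorem gives it directly: $\mbE[x_i^2x_j^2] = p_ip_j + 2(\mbE[x_ix_j])^2 = p_ip_j$, so the covariance of $[\tbs_m]_i^2$ and $[\tbs_m]_j^2$ is zero for $i\neq j$. I would state that version to avoid any subtlety when some $p_i$ are zero.
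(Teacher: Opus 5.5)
Your proof is correct and takes essentially the same route as the paper, which defers to the graph-domain argument of \cite[Prop.~2]{marques2017stationary} with the graph Fourier transform replaced by the TFT: unbiasedness follows from linearity and the diagonal spectral covariance, and the $\frac{2}{M}p_i^2$ diagonal covariance follows from independence of the Gaussian spectral coefficients and the fourth moment $\mathbb{E}[x^4]=3\sigma^4$. Your caution about zero PSD entries is harmless but unnecessary, since a coordinate with $p_i=0$ is almost surely zero and therefore trivially independent of the others.
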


%
\begin{proof}
The proof follows the same steps as that in \cite[Prop.~2]{marques2017stationary}, replacing the graph Fourier transform with the TFT.
\end{proof}

%

Thus, we have shown that $\hbp_{\rm pg}$ is an \emph{unbiased} estimate of $\bbp$, that is, $\bbb_{\rm pg} = \bbzero$.
Furthermore, if the signal $\bbs$ is Gaussian, we can guarantee that the estimate $\hbp_{\rm pg}$ has \emph{uncorrelated entries}.
The restriction to Gaussianity for the covariance of $\hbp_{\rm pg}$ is analogous to stationarity in the time and graph settings, which is due to the computation of fourth-order moments of $\bbs$~\cite[Sec.~8.2]{hayes2009statistical}.
Beyond merely being uncorrelated, we further note that the variance of $[\hbp_{\rm pg}]_i$ is proportional to $p_i^2$.
Thus, since $\hbp_{\rm pg}$ is an unbiased estimate, its expected error is dictated by its variance, and therefore by the magnitude of $\bbp$.
More specifically, by the definition of $\bbSigma_{\rm pg}$, we have that
\alna{
    \MSE_{\rm pg}
    &\,=\,&
    \mbE\big[ \| \hbp_{\rm pg} - \bbp \|_2^2 \big]
    \,=\,
    \mbE\big[
        \tr\big( 
        (\hbp_{\rm pg} - \bbp) 
        (\hbp_{\rm pg} - \bbp)^\top
        \big)
    \big]
&\nonumber\\&
    &\,=\,&
    \tr(\hbC_{\rm pg})
    \,=\,
    \frac{2}{M} \| \bbp \|_2^2.
\label{E:mse_period}}
Unsurprisingly, the MSE of $\hbp_{\rm pg}$ is inversely proportional to the number of observations $M$.
In scenarios with low $M$, we may consider alternative techniques to reduce $\MSE_{\rm pg}$ at the cost of potential bias, which we leave for future exploration.
Note that by~\eqref{eq:topo_simdiag} and~\eqref{eq:est_cov_pg}, the true covariance $\bbC$ and estimate $\hbC_{\rm pg}$ come from unitary transformations of $\diag(\bbp)$ and $\diag(\hbp_{\rm pg})$, respectively, and therefore $\hbC_{\rm pg}$ is similarly an unbiased estimate of $\bbC$ with MSE $\frac{2}{M} \| \bbp \|_2^2 = \frac{2}{M} \| \bbC \|_F^2$.

\subsubsection{Subspace-aware estimation}\label{S:subspace_psd}

Real-world topological signals are often restricted to a particular subspace.
For example, in~\eqref{eq:hodge_components}, edge flow signals that represent communication traffic are typically dominated by the harmonic component, lying primarily in the $\bbU_{k,{\rm H}}$ subspace. 
In such cases, we can exploit the restricted support of signals in the spectral domain as additional structure when estimating the covariance $\bbC$.

Let $\bbU_{\ccalS}$ denote the columns of the eigenvectors $\bbU$ corresponding to the subspace of interest, indexed by $\ccalS \subseteq [N]$.
When $\bbU_{\ccalS}$ is known, we can rewrite the covariance as $\bbC = \bbU \diag(\bbp) \bbU^\top = \bbU_{\ccalS} \diag( \bbp_{\ccalS} ) \bbU_{\ccalS}^\top$, where $\bbp_{\ccalS}$ contains the nonzero values of the PSD.
Then, the nonzero entries of our estimators in~\eqref{eq:est_psd_cg} and~\eqref{eq:est_psd_pg} become
\alna{
    \!
    [\hbp_{{\rm cg}}]_{\ccalS}
    &\,=\,&
    \diag( \bbU_{\ccalS}\hbC\bbU_{\ccalS}^\top ),
    ~~
    [\hbp_{{\rm pg}}]_{\ccalS}
    \!=\!
    \frac{1}{M}
    \big( (\bbU_{\ccalS}^\top \bbS) \circ (\bbU_{\ccalS}^\top \bbS) \big)\bbone,
    \!
\nonumber}
with the subsequent covariance estimation proceeding as in~\eqref{eq:est_cov_cg} and~\eqref{eq:est_cov_pg}, with $\bbU_{\ccalS}$ in place of $\bbU$.

In some cases, we may know that the signal $\bbs$ is confined to a subspace that is \emph{unknown}.
For example, consider a $k$-signal $\bbs$ and $\bbT = \bbL_k = \bbU_{k}\diag(\bblambda_k)\bbU_k^\top$, which is restricted to the gradient, curl, or harmonic subspace, that is, $\bbU_{\ccalS} \in \{ \bbU_{k,{\rm G}}, \bbU_{k,{\rm C}}, \bbU_{k,{\rm H}} \}$ as defined in~\eqref{eq:hodge_components}, but we do not know which.
We can infer which subspace is active before estimating the PSD $\bbp$ or covariance $\bbC$.
As a simple yet effective strategy, we project $\bbs$ onto each of the three mutually orthogonal subspaces and compute the energy for each,
\alna{
    E_{\rm G} = \| \bbU_{k, {\rm G}}^\top \bbs \|_2^2,
    ~
    E_{\rm C} = \| \bbU_{k, {\rm C}}^\top \bbs \|_2^2,
    ~
    E_{\rm H} = \| \bbU_{k, {\rm H}}^\top \bbs \|_2^2.
\label{eq:subspace_detection}}
We can then identify the most likely component through energy thresholds or likelihood-based tests~\cite{liu2025matched}, followed by PSD and covariance estimation as before.

\subsection{Parametric Estimators}

We next introduce topological filter models for simplicial signals that allow \emph{parametric} approaches to estimate $\bbp$ and $\bbC$.
In particular, we define generalizations of moving-average (MA) and AR stochastic processes for the simplicial domain.
These models correspond to stationary processes, and, as we will show, this reduces complexity of modeling the covariance $\bbC$ from $\ccalO(N^2)$ to $\ccalO(R)$, the number of filter parameters. 

\subsubsection{MA Topological Processes} 
\label{S:MA_topo}

As in Definition~\ref{D:WeaklyStationaryTopoProcess}, we assume that our stationary topological signals are generated as the output of a finite-order topological filter in~\eqref{eq:top_filter} driven by white noise.
As mentioned in Section~\ref{S:StatDefinition}, this leads to parametric covariance estimators via the model in~\eqref{eq:topo_psd_filter_relation} since we need only recover the filter parameters.

This allows us to define \emph{MA topological processes}, that is, signals following~\eqref{eq:topo_stationarity_def}.
To formalize the parametric model for MA processes, we let $\bbbeta = [\beta_0, \dots, \beta_{R-1}]^\top$ denote the filter coefficients and $\bbT \in \{ \bbL_k, \bbD \}$ our symmetric topological operator of interest.
Then, our finite-impulse-response (FIR) topological filter is defined as
\alna{
    \bbH(\bbbeta)
    ~=~
    \sum_{r = 0}^{R-1}
    \beta_r \bbT^r,
\label{eq:topo_FIR_filter}}
where $R < N$, and, in practice, $R \ll N$.
Since our stationary signal $\bbs$ is generated from white noise $\bbw$, we have
\alna{
    \bbs = \bbH(\bbbeta) \bbw,
    \quad
    \mbE[\bbw\bbw^\top] = \bbI,
\label{eq:signal_MA}}
with covariance
\alna{
    \bbC^{\beta}(\bbbeta)
    =
    \bbH(\bbbeta)\bbH(\bbbeta)^\top
    =
    \sum_{r = 0}^{R-1}
    \sum_{r' = 0}^{R-1}
    \beta_r \beta_{r'}
    \bbT^{r + r'},
\label{eq:cov_topo_MA}}
which is itself a polynomial of degree $2(R-1)$ in $\bbT$
\begin{equation}\label{eq:cov_gamma_topo}
    \bbC^{\gamma}(\bbgamma) = \sum_{r=0}^{2(R-1)} \gamma_r \bbT^r,
    \qquad
    \gamma_r := \sum_{r'+r''=r} \beta_{r'}\beta_{r''},
\end{equation}
with coefficients $\bbgamma \in \reals^{2(R-1)}$ obtained via convolution.

Then, with the eigendecomposition $\bbT = \bbU \diag(\bblambda) \bbU^\top$ and the definition in~\eqref{eq:topo_psd}, the PSD corresponding to $\bbC^{\beta}(\bbbeta)$ is
\alna{
    \bbp(\bbbeta)
    &~=~&
    h^{\beta}(\bblambda)
    \circ
    h^{\beta}(\bblambda)
    =
    (\bbPsi^{\beta} \bbbeta)
    \circ
    (\bbPsi^{\beta} \bbbeta),
\label{eq:topo_psd_beta}}
where $h^{\beta}(\lambda) = \sum_{r=0}^{R-1} \beta_r \lambda^r$ denotes the topological frequency response of $\bbH^\beta(\bbbeta)$, and $\bbPsi^{\beta} \in \reals^{N \times R}$ is a Vandermonde matrix such that $\Psi^{\beta}_{ir} = \lambda_i^{r-1}$ for all $i \in [N], r \in [R]$.
Thus, our definition of MA topological processes mirrors that of the time and graph domains, where the PSD is equal to the filter frequency response squared.

If $\bbs$ is an MA process, we can estimate the covariance $\bbC$ by recovering $\bbbeta$ given the sample covariance $\hbC$ in~\eqref{eq:sample_cov}, 
\alna{
    \hbbeta_{\bbC}
    &~=~&
    \argmin\nolimits_{\bbbeta}
    \ccalD_{\bbC}
    \Big(
        \hbC, \bbC^{\beta}(\bbbeta)
    \Big),
\label{eq:opt_beta_space_topo}}
with $\bbC^{\beta}(\bbbeta)$ in~\eqref{eq:cov_topo_MA} and $\ccalD_{\bbC}(\cdot,\cdot)$ as a distance metric quantifying covariance error.
As a potentially more efficient alternative, we may estimate $\bbbeta$ in the frequency domain given an estimate of the PSD, such as the periodogram-based $\hbp_{\rm pg}$ in~\eqref{eq:est_psd_pg},
\alna{
    \hbbeta_{\bbp}
    &~=~&
    \argmin\nolimits_{\bbbeta}
    \ccalD_{\bbp}
    \big(
        \hbp_{\rm pg}, \,
        ( \bbPsi^{\beta} \bbbeta \circ \bbPsi^{\beta} \bbbeta )
    \big),
\label{eq:opt_beta_freq_topo}}
with the metric $\ccalD_{\bbp}(\cdot,\cdot)$ measuring PSD error.

Regardless of our choice of~\eqref{eq:opt_beta_space_topo} or~\eqref{eq:opt_beta_freq_topo}, both objective functions are quadratic in $\bbbeta$ and therefore are typically nonconvex.
However, under additional assumptions on $\bbT$ or $\bbbeta$, we may relax~\eqref{eq:opt_beta_freq_topo} to be convex or tractable, such as for a positive semidefinite $\bbT$ and nonnegative $\bbbeta$.
Moreover, in the specific case when $\ccalD_{\bbp}(\bbp_1,\bbp_2) = \| \bbp_1 - \bbp_2 \|_2^2$, we can employ efficient phase-retrieval algorithms such as Wirtinger flow~\cite{candes2015phase,shechtman2015phase} to solve~\eqref{eq:opt_beta_freq_topo}.
We highlight one convex relaxation of~\eqref{eq:opt_beta_space_topo}, where we ignore the convolutional structure in $\gamma$ defined in~\eqref{eq:cov_gamma_topo}.
To be specific, we estimate $\bbC^{\gamma}(\bbgamma)$ instead of $\bbC^{\beta}(\bbbeta)$ for
\alna{
    \hbgamma_{\bbC}
    &~=~&
    \argmin\nolimits_{\bbgamma}
    \ccalD_{\bbC}
    \Big(
        \hbC, \bbC^{\gamma}(\bbgamma)
    \Big),
\label{eq:opt_gamma_topo}}
which is convex whenever the metric $\ccalD_{\bbC}(\cdot,\cdot)$ is convex.
Similarly, with $\bbPsi^{\gamma} \in \reals^{N \times 2R}$ such that $\Psi^{\gamma}_{ir} = \lambda_i^{r-1}$ for $i \in [N], r \in [2R-1]$, we can estimate $\bbgamma$ in the frequency domain via
\alna{
    \hbgamma_{\bbp}
    &~=~&
    \argmin\nolimits_{\bbgamma}
    \ccalD_{\bbp}
    \big(
        \hbp_{\rm pg},
        \bbPsi^{\gamma} \bbgamma
    \big).
\label{eq:opt_gamma_freq_topo}}
Observe that if $\ccalD_{\bbC}(\bbC_1,\bbC_2) = \| \bbC_1 - \bbC_2 \|_F^2$ and $\ccalD_{\bbp}(\bbp_1,\bbp_2) = \| \bbp_1 - \bbp_2 \|_2^2$, then the objective functions and therefore the solutions of~\eqref{eq:opt_gamma_topo} and~\eqref{eq:opt_gamma_freq_topo} are equivalent.
However, obtaining the closed-form solutions differ in complexity, which we show empirically in Section~\ref{S:exps}.
Thus, our generalization of the classical parameterized MA process for simplicial complexes simplifies the representation of the covariance matrix $\bbC$ from $\ccalO(N^2)$ degrees of freedom to $\ccalO(R)$ or $\ccalO(R^2)$, mitigating the complexity of recovering $\bbC$.

\subsubsection{AR Topological Processes}
\label{S:AR_topo}

We next introduce the AR filter model for topological signals, which describes the signal $\bbs$ via recursive filtering given white noise,
\alna{
    \bbs
    &~=~&
    \sum_{r = 1}^R \alpha_r \bbT^r \bbs + \bbw,
\label{eq:topo_AR_process}}
where $\bbw$ is a white simplicial signal with $\mbE[\bbw] = \bbzero$ and $\mbE[\bbw\bbw^\top] = \bbI$, and $\bbalpha = [ \alpha_1, \dots, \alpha_R ]^\top \in \reals^R$ contains the AR coefficients.
We can equivalently write the recursive process in~\eqref{eq:topo_AR_process} as a filtering operation, where we have the filter
\alna{
    \bbH(\bbalpha)\bbs = \bbw,
    \quad
    \bbH(\bbalpha) := \bbI - \sum_{r=1}^R \alpha_r \bbT^r,
\label{eq:topo_AR_filter_form}}
leading to the generative form of $\bbs$
\alna{
    \bbs = \bbH^{-1}(\bbalpha) \bbw.
\label{eq:topo_AR_gen}}
Then, the covariance matrix of $\bbs$ is
\alna{
    \bbC^{\alpha}(\bbalpha)
    =
    \bbH^{-2}(\bbalpha),
\label{eq:cov_topo_AR}}
while the PSD can be defined via the frequency response of $\bbH(\bbalpha)$, that is, 
\alna{
    [\bbp(\bbalpha)]_i
    &\,:=\,&
    \frac{1}{\big( h^{\alpha}(\lambda_i) \big)^2}
    \,=\,
    \frac{1}{[ \bbone - \bbPsi^{\alpha}\bbalpha ]_i^{2}}
    ~
    \forall \, i\in[N]
\label{eq:psd_topo_AR}}
for $h^\alpha(\lambda) = 1 - \sum_{r=1}^R \alpha_r \lambda^r$ and Vandermonde matrix $\bbPsi^{\alpha} \in \reals^{N \times R}$ with $\Psi^{\alpha}_{ir} = \lambda_i^r$ for all $i\in[N], r\in[R]$, where the topological operator again satisfies $\bbT = \bbU \diag(\bblambda) \bbU^\top$.

To estimate the covariance matrix of an AR topological process, we recover the filter parameters $\bbalpha$ by minimizing the discrepancy between $\bbC^{\alpha}(\bbalpha)$ and the sample covariance $\hbC$ in~\eqref{eq:sample_cov}.
However, to avoid requiring the inversion of $\bbH(\bbalpha)$ in~\eqref{eq:cov_topo_AR}, rather than comparing the two matrices $\bbC^{\alpha}(\bbalpha)$ and $\hbC$ directly, we instead compare $\hbC \big( \bbC^{\alpha}(\bbalpha) \big)^{-1}$ to the identity matrix, resulting in the following optimization problem
\alna{
    \hbalpha_{\bbC}
    &~=~&
    \argmin\nolimits_{\bbalpha}
    \ccalD_{\bbC}
    \Big(
        \hbC \big( \bbC^{\alpha}(\bbalpha) \big)^{-1}, \bbI
    \Big).
\label{eq:opt_alpha_topo}}
Analogously to the MA case, we can also estimate the filter parameters $\bbalpha$ in the frequency domain via the PSD~\eqref{eq:psd_topo_AR}, 
\alna{
    \hbalpha_{\bbp}
    &~=~&
    \argmin\nolimits_{\bbalpha}
    \ccalD_{\bbp}
    \Big(
        \diag^{-1}\big(\bbp(\bbalpha)\big) \hbp_{\rm pg},
        \bbone
    \Big),
\label{eq:opt_alpha_freq_topo}}
where $\hbp_{\rm pg}$ denotes the periodogram-based PSD estimate in~\eqref{eq:est_psd_pg}.
As before, the problems~\eqref{eq:opt_alpha_topo} and~\eqref{eq:opt_alpha_freq_topo} are quadratic in $\bbalpha$ and therefore may be nonconvex for common choices of $\ccalD_{\bbC}$ and $\ccalD_{\bbp}$.
Thus, we can again consider estimating an alternative set of parameters $\bbeta \in \reals^{ 2R }$ defined by the convolution of $\bbalpha$ with itself, where
\alna{
    \bbC^{\eta}(\bbeta)
    &~=~&
    \bigg(
    \bbI - 
    \sum_{r=1}^{2R} \eta_r \bbT^r
    \bigg)^{-1} \! ,
    ~~
    \eta_r := \sum_{r' + r'' = r} \alpha_{r'} \alpha_{r''}.
\label{eq:cov_eta_topo}}
This leads to the simplified formulations [cf.~\eqref{eq:opt_alpha_topo} and \eqref{eq:opt_alpha_freq_topo}]
\begin{align}
    \hbeta_{\bbC}
    ~=~&\argmin\nolimits_{\bbeta}
    \ccalD_{\bbC}
    \Big(
        \hbC 
        \big( 
            \bbI - {\textstyle\sum_{r=1}^{2R}} \eta_r \bbT^r
        \big), \,
        \bbI
    \Big)\label{eq:opt_eta_topo}\\
    \hbeta_{\bbp}
    ~=~&\argmin\nolimits_{\bbeta}
    \ccalD_{\bbp}
    \Big(
        \hbp_{\rm pg}
        \circ
        \big(
            \bbone - \bbPsi^{\eta}\bbeta
        \big), \,
        \bbone
    \Big)
\label{eq:opt_eta_freq_topo}
\end{align}
where $\bbPsi^{\eta} \in \reals^{N \times 2R}$ is the Vandermonde matrix with $\Psi^{\eta}_{ir} = \lambda_i^r$ for $i \in [N], r \in [2R]$.
While~\eqref{eq:opt_eta_freq_topo} is computationally simpler than~\eqref{eq:opt_eta_topo} as for the MA model, note that $\ccalD_{\bbC}(\bbC_1,\bbC_2) = \| \bbC_1 - \bbC_2 \|_F^2$ and $\ccalD_{\bbp}(\bbp_1,\bbp_2) = \| \bbp_1 - \bbp_2 \|_2^2$ do not yield equivalent objective functions due to additional terms in~\eqref{eq:opt_eta_topo} from off-diagonal matrix entries.
Since~\eqref{eq:opt_eta_topo} requires optimizing $N(N+1)/2$ terms instead of only the $N$ terms of~\eqref{eq:opt_eta_freq_topo}, we require more signals $M$ for a competitive solution from~\eqref{eq:opt_eta_topo}.
Thus, in scenarios where our estimate of $\hbp_{\rm pg}$ is more  reliable than $\hbC$, estimating $\hbeta_{\bbp}$ will likely be superior to $\hbeta_{\bbC}$ in both computation and sample complexity.

One interesting case is first-order AR processes ($R=1$),
\alna{
    \bbC = (\bbI - \alpha \bbT)^{-2},
\nonumber}
which enjoys a few advantages in estimating simplicial complex properties.
First, since the filter in~\eqref{eq:topo_AR_filter_form} requires only a scalar parameter $\alpha \in \reals$, our estimation of $\bbC$ in~\eqref{eq:opt_alpha_topo} and $\bbp$ in~\eqref{eq:opt_alpha_freq_topo} becomes computationally simpler given a known topological operator $\bbT \in \{ \bbL_k, \bbD \}$.
Second, the sparsity pattern of $\bbT$ directly affects the support of the precision matrix, $\bbC^{-1} = \bbI - 2\alpha\bbT + \alpha^2 \bbT^2$.
Thus, we may recover $\bbC$ by first estimating a sparse $\bbC^{-1}$.
Third, we can similarly leverage this relationship to estimate an unknown $\bbT$ by jointly estimating $\bbC^{-1}$ and $\bbT$, the practical and theoretical advantages of which we will explore in future work.

As a final note on these two signal models, we can combine the definitions of MA and AR topological processes for \emph{autoregressive–moving-average} (ARMA) signals, which take the form $\bbH_1(\bbalpha)\bbs = \bbH_2(\bbbeta) \bbw$ with polynomial operators $\bbH_1(\bbalpha)$ as in~\eqref{eq:topo_FIR_filter} and $\bbH_2(\bbbeta)$ as in~\eqref{eq:topo_AR_filter_form}.
While interesting, we leave the study of these ARMA models for future exploration.

\subsubsection{Non-Polynomial Filtering}
\label{S:BeyondPoly_topo}

We have thus far considered topological signal models based on polynomial filters of the operator $\bbT \in \{ \bbL_k, \bbD \}$, which are well-established models of diffused signals over simplicial complexes.
Here, we consider more general, non-polynomial filters as additional parametric representations of stationary topological processes.

\vspace{.1cm}
\noindent \emph{Spectral-domain filter parameterizations.}
Instead of defining a topological filter in the spatial domain, we consider some frequency response $h^{\theta}(\lambda)$ parameterized by a low-dimensional vector $\bbtheta$. 
As for MA and AR scenarios, we can estimate the covariance $\bbC$ via the filter frequency response, which requires that we recover $\bbtheta$ by
\alna{
    \hbtheta_{\rm spec}
    =
    \argmin\nolimits_{\bbtheta}
    \ccalD_{\bbp}\big(
        \hbp_{\rm pg},
        h^{\theta}(\bblambda)
        \circ
        h^{\theta}(\bblambda)
    \big).
\label{eq:opt_theta_filter}}
This formulation is quite general, but classical examples of such parametric models include exponential or rational responses such as
\alna{
    h^{\theta}(\lambda)
    =
    e^{-\theta_1 \lambda},
    \quad
    h^{\theta}(\lambda) = {\rm sigmoid}(\theta_1 \lambda + \theta_2),
\nonumber}
which act as classical heat diffusion or bandpass filters, respectively, for signals on simplicial complexes.

\vspace{.1cm}
\noindent \emph{Kernel parameterizations.}
Alternatively, we may parameterize the PSD $\bbp_{\theta} = \kappa^{\theta}(\bblambda)$ directly rather than the filter, that is, learning a spectral kernel, a function of the simplicial complex spectrum.
We then estimate the PSD via its parameters $\bbtheta$ as
\alna{
    \bbtheta_{\rm ker}
    &~=~&
    \argmin\nolimits_{\bbtheta}
    \ccalD_{\bbp}
    \big(
        \hbp_{\rm pg},
        \kappa^{\theta}(\bblambda)
    \big),
\label{eq:opt_theta_kernel}}
with examples including Gaussian or Laplacian kernels,
\alna{
    \kappa^{\theta}(\lambda) = e^{-\theta_1 \lambda^2},
    \quad
    \kappa^{\theta}(\lambda) = \frac{1}{( 1 + \theta_1 \lambda )^{\theta_2}}.
\nonumber}
These particular kernels can model smooth PSDs or power-law behaviors, which are observed in many real topological datasets~\cite{bick2023higher, salnikov2018simplicial, barabasi2013network}.
Observe that the difference between the spectral-based approach in~\eqref{eq:opt_theta_filter} and the kernel-based one in~\eqref{eq:opt_theta_kernel} is primarily based on interpretation.
Both minimize the difference between $\hbp_{\rm pg}$ and a parameterized function, the meaning of which determines our focus.

Not only do the spectral- and kernel-based filter models $h^{\theta}(\lambda)$ and $\kappa^{\theta}(\lambda)$ allow us more flexible low-dimensional parametrizations, but we can also relate them to topological machine learning.
For example, we can treat $\kappa^{\theta}(\lambda)$ as a learnable kernel to define a Gaussian process over the topological frequency domain to predict the PSD of any simplicial complex based on the spectrum $\bblambda$ of its associated topological operator $\bbT \in \{ \bbL_k, \bbD \}$.
More specifically, by~\eqref{eq:topo_freq_cov}, we can construct the covariance matrix of the frequency response $\tbs = \bbU^\top \bbs$ of any stationary topological process $\bbs$ as $\diag(\kappa^{\theta}(\bblambda))$.
Additionally, we can take a data-driven approach to learning the filter frequency response, where we can model $h^{\theta}(\lambda)$ as a neural network and fit the filter parameters $\bbtheta$ to observed topological signals.
For these examples, we can still exploit topological stationarity by learning filter parameters in the frequency domain, for which model complexity is $\ccalO(N)$ instead of $\ccalO(N^2)$, as before.

\begin{remark}[Single vs. Multiorder Models]\label{R:single_multiorder}
We focus on the topological operator $\bbT$ as either a Hodge Laplacian $\bbL_k$ for $k$-signals residing on a single order or the Dirac operator $\bbD$ for multiorder signals, with data observed at all dimensions at once.
For the former, we could still model signals at all orders by considering $K+1$ filters with coefficients $\{ \bbh_k \}_{k=0}^K$, one for each order, albeit with no relationship between filters of different dimensions.
For the latter, we stipulate that the diffusion dynamics across orders are equivalent, that is, the same filter coefficients $\bbh$ are shared across all dimensions.
However, we may also consider cases in between the two extremes.
For example, we could impose structural relationships across the coefficient sets $\{ \bbh_k \}_{k=0}^K$, such as smooth variations or hierarchical dependencies across orders.
While this allows greater modeling flexibility, we can still apply the proposed estimators in this section, merely modifying them to include any constraints on the filter coefficients.

\end{remark}

\section{Stationarity-based models and tools} \label{S:problem}

We next demonstrate the value of stationary topological processes for classical signal-processing tasks. We study topological denoising (generalizing the Wiener filter) and interpolation (showing that simplicial $k$-signals can be recovered without direct 
$k$-order observations) and examine Gaussian topological stationarity before concluding with open problems.

\subsection{Wiener Filtering for Topological Signals}
\label{S:Wiener_topo}

We begin with the task of recovering an unknown topological signal $\bbs$ that has been corrupted by additive noise, for which we generalize the classical Wiener filter for the simplicial complex domain. 
Similarly to covariance estimation, if $\bbs$ is stationary, then we can greatly improve the efficiency of the task by operating in the frequency domain, which we will show corresponds to a simple element-wise scaling operation.

We wish to recover a zero-mean topological signal $\bbs$ that is stationary with respect to $\bbT \in \{ \bbL_k, \bbD \}$ and has covariance $\bbC = \mbE[\bbs\bbs^\top]$.
In particular, given the corrupted signal
\alna{
    \bby = \bbs + \bbv,
\label{eq:Wiener_obs}}
where $\bbv$ is a zero-mean white random signal that is independent of $\bbs$ with covariance $\bbC_v = \sigma_v^2 \bbI$ for $\sigma_v^2 > 0$, we seek the minimum MSE (MMSE) estimate of $\bbs$ from $\bby$, that is,
\alna{
    \hbs\!=\!\argmin\nolimits_{\bbz}
    \mbE\big[\| \bbs - \bbz \|_2^2 | \bby\big]
    \!=\!\bbC ( \bbC + \bbC_v )^{-1} \bby=\bbG\bby,
\label{eq:Wiener_MMSE_problem}}
where the optimal solution comes from the definitions of $\bbs$, $\bby$, and topological stationarity, and $\bbG$ is the Wiener filter matrix~\cite{kay1998fundamentals,hayes2009statistical}.
While straightforward, it may be unwieldy and even infeasible to compute $\bbG$ for even medium-sized simplicial complexes, as it requires the inversion of an $N \times N$ matrix.
However, by topological stationarity, we can reduce the obtention of $\hbs$ dramatically.
Recall that $\bbC$ shares eigenvectors $\bbU$ with $\bbT$ since $\bbs$ is stationary.
Then, by Definition~\ref{def:topo_psd} and the whiteness of $\bbv$, we have that
\alna{
    \bbC = \bbU \diag(\bbp)\bbU^\top,
    \quad
    \bbC_v = \bbU \diag(\sigma_v^2 \bbone) \bbU^\top.
\nonumber}
Then, with signal TFTs $\tilde{\hbs} = \bbU^\top \bbs$ and $\tby = \bbU^\top \bby$, we can rewrite~\eqref{eq:Wiener_MMSE_problem} in the frequency domain as
\alna{
    \tilde{\hbs} 
    &~=~&
    \bbU^\top \bbG \bbU \tby
    \,=\,
    \diag(\bbp)
    \diag^{-1}
    \big(
        \bbp + \sigma_v^2 \bbone
    \big)
    \tby
    \,=\,
    \tbg \circ \tby
\label{eq:Wiener_spectral}}
where $\tbg$ denotes the frequency response of the topological Wiener filter $\bbG$, where
\alna{
    \tilde{g}_i
    &~:=~&
    \frac{p_i}{p_i + \sigma_v^2}
    ~~
    \forall ~ i\in[N].
\label{eq:Wiener_freq_response}}
Thus, obtaining the MMSE of $\bbs$ scales reasonably even for large simplicial complexes, as we need only perform element-wise scaling of the TFT $\tby$ with $\tbg$ for $\ccalO(N)$, followed by transforming back to the spatial domain $\hbs = \bbU \tilde{\hbs}$ for $\ccalO(N^2)$, as opposed to the $\ccalO(N^3)$ of~\eqref{eq:Wiener_MMSE_problem}.
Given~\eqref{eq:Wiener_freq_response}, the solution $\hbs$ preserves entries of $\tby$ with higher SNR, that is, $i \in [N]$ such that $p_i \gg \sigma_v^2$ ($g_i \approx 1$), while attenuating entries with lower SNR, where $p_i$ is small relative to $\sigma_v^2$ ($g_i \approx 0$).
Finally, note that we can replace the PSD $\bbp$ in~\eqref{eq:Wiener_freq_response} with an estimate such as $\hbp_{\rm pg}$ in~\eqref{eq:est_psd_pg} when $\bbp$ is unknown.

\subsection{Topological Signal Reconstruction and Interpolation}
\label{S:reconstruction_topo}

We next consider topological signal reconstruction under the assumption of stationarity.
In this case, we are given partially observed signals, such as measuring only a subset of nodes, edges, or higher-order simplices or missing entire simplicial orders, and we wish to recover the full, original signals.
As before, we exploit the additional covariance structure imposed by stationarity for statistically optimal signal interpolation.

As in Section~\ref{S:Wiener_topo}, we seek a zero-mean topological signal $\bbs$ that is stationary with respect to $\bbT \in \{ \bbL_k, \bbD \}$ and has covariance $\bbC$, but here, we observe only a subset of its entries.
More formally, our partial observations are modeled as
\alna{
    \bar{\bbs}
    &~=~&
    \bbTheta \bbs + \bbv,
\label{eq:interp_obs_model}}
where $\bbTheta \in \{ 0,1 \}^{P \times N}$ selects the $P < N$ observed entries of $\bbs$, and $\bbv$ again denotes zero-mean white noise independent of $\bbs$ with covariance $\bbC_v = \sigma_v^2 \bbI$.
As before, we first consider the optimal linear MMSE of $\bbs$ from~\eqref{eq:interp_obs_model}, which is given by
\alna{
    \hbs
\!=\!
\bbC\bbTheta^{\!\top}
\!\big(
\bbTheta\bbC\bbTheta^{\!\top}
\!+\!
\sigma_v^2 \bbI
\big)^{-1}
\bar{\bbs}\!=\!
    \big(
        \bbTheta^\top \bbTheta
        \!+\!
        \sigma_v^2 \bbC^{-1}
    \big)^{\!-1}
    \bbTheta^{\!\top }
    \bar{\bbs},
\label{eq:interp_closed_form}}
where $\bbTheta^\top\bbTheta$ is a binary diagonal matrix that encourages preserving the observed simplices $\bar{\bbs}$ and $\bbTheta^\top \bar{\bbs}$ is a zero-padded interpolation of the partially observed signal $\bar{\bbs}$.
However, note that~\eqref{eq:interp_closed_form} is also the solution to the quadratic problem
\alna{
    \hbs &~=~&
    \argmin\nolimits_{\bbz}
    \frac{1}{\sigma_v^2}
    \| \bar{\bbs} - \bbTheta \bbz \|_2^2
    +
    \bbz^\top \bbC^{-1} \bbz,
\label{eq:interp_map_form}}
so the optimal MMSE solution $\hbs$ can also be seen as obtained from a regularized optimization problem, where the term $\bbz^\top \bbC^{-1} \bbz$ promotes stationary behavior for the estimate $\hbs$.
Furthermore, if $\bbs$ is Gaussian, then $\bbz^\top \bbC^{-1} \bbz$ in~\eqref{eq:interp_map_form} acts as a prior and~\eqref{eq:interp_closed_form} is the maximum a posteriori (MAP) estimate.
Thus, the structure of $\bbC^{-1}$ influences our estimate of $\bbs$.
With known properties of the stationary signal $\bbs$ that appear in $\bbC$, we can specify the expression of the regularizer and possibly even impose additional structure to improve estimation.
We share a few particularly relevant choices below.

\begin{itemize}[left= 2pt .. 10pt, noitemsep]
    \im {\bf Smoothness:}
    We are often interested in signals that are smooth, that is, signals that do not vary greatly across connections.
    For stationary topological signals, we consider $\bbC^{-1} = \bbL_k$ if $\bbs$ is a $k$-signal that is smooth with respect to upper- and lower-level connections or $\bbC^{-1} = \bbD^2 = \ccalL$ if $\bbs$ is a multiorder signal that is smooth across all orders.
    The regularizer $\bbz^\top \bbC^{-1} \bbz$ in~\eqref{eq:interp_map_form} will penalize high signal variance across simplicial connections, prioritizing a smoother estimate $\hbs$.
    For the multiorder setting, \eqref{eq:interp_map_form} becomes
    \alna{
        \hbs
        &~=~&
        \argmin\nolimits_{\bbz}
        \frac{1}{\sigma_v^2}
        \| \bar{\bbs} - \bbTheta\bbz \|_2^2
        +
        \bbz^\top \bbD^2 \bbz
    &\nonumber\\&
        &~=~&
        \argmin\nolimits_{\bbz}
        \frac{1}{\sigma_v^2}
        \| \bar{\bbs} - \bbTheta\bbz \|_2^2
        +
        \sum_{k=0}^K
        \bbz^{k\top}
        \bbL_k
        \bbz^k,
    \label{eq:smooth_interp}}
    where $\bbz$ is a multiorder signal with $\bbz^k$ as the signal values over $k$-simplices.
    
    \im {\bf First-order AR:}
    We return to the scenario where $\bbs$ is a first-order AR process discussed in Section~\ref{S:AR_topo}, which has a precision matrix $\bbC^{-1} = \bbI + \alpha^2 \bbT^2 - 2\alpha \bbT$ for $\bbT \in \{ \bbL_k, \bbD \}$.
    In this setting, $\bbs$ follows a structural equation model (SEM) on the simplicial complex, where the signal at each simplex is modeled as a noisy linear combination of the signal at neighboring simplices.
    We rewrite~\eqref{eq:interp_map_form} as
    \alna{
        \hbs \!=\! \argmin\nolimits_{\bbz}
        \frac{1}{\sigma_v^2}
        \| \bar{\bbs} - \bbTheta\bbz \|_2^2
        +
        \bbz^{\!\top}\!
        \big(
            \bbI \!+\! \alpha^2 \bbT^2 \!-\! 2\alpha \bbT
        \big)
        \bbz.
    \label{eq:sem_interp}}
    When $\bbT = \bbD$, the term $\alpha^2 \bbD^2$ again promotes signal smoothness across all orders.
    The term $-2\alpha \bbD$ encourages consistent signal behavior between adjacent orders, where the orientation of the curl of $(k-1)$-signals does not greatly oppose the orientation of the gradient of $k$-signals.
    
    \im {\bf Subspace-aware:}
    We return to the setting in Section~\ref{S:subspace_psd}, where we know that $\bbs$ is primarily restricted to some subspace denoted by $\bbU_{\ccalS}$, a scenario common to certain domains.
    Recall that this corresponds to 
    \alna{
        \bbs \approx \bbU_{\ccalS} \tbs_{\ccalS},
        \quad
        \bbC \approx \bbU_{\ccalS} \diag(\bbp_{\ccalS}) \bbU_{\ccalS}^\top,
    \label{eq:subspace_model}}
    where $\ccalS$ not only indexes the active columns of the eigenvectors $\bbU$ of $\bbT$ but also selects the nonzero entries of the signal TFT $\tbs$ and the PSD $\bbp$.
    Then, $\bbC^{-1} = \bbU_{\ccalS} \diag^{-1}(\bbp_{\ccalS}) \bbU_{\ccalS}^\top$ in~\eqref{eq:interp_map_form}.
    However, observe that we can now rewrite~\eqref{eq:interp_obs_model} as
    \alna{
        \bar{\bbs} = \bbTheta \bbU_{\ccalS} \tbs_{\ccalS} + \bbv,
    \nonumber}
    which motivates solving for $\bbs$ in the frequency domain, similarly to Section~\ref{S:Wiener_topo}. 
    We then obtain the estimate
    \alna{
        \hat{\tbs}_{\ccalS}
        &~=~&
        \argmin\nolimits_{\tbz}
        \frac{1}{\sigma_v^2}
        \| \bar{\bbs} - \bbTheta \bbU_{\ccalS} \tbz \|_2^2
        +
        \tbz^\top
        \diag^{-1}(\bbp_{\ccalS}) 
        \tbz
    &\nonumber\\&
        &~=~&
        \Big(
            \bbU_{\ccalS}^\top \bbTheta^\top \bbTheta \bbU_{\ccalS}
            +
            \sigma_v^2 \diag^{-1} (\bbp_{\ccalS})
        \Big)^{-1}
        \bbU_{\ccalS}^\top \bbTheta^\top \bar{\bbs},
    \label{eq:bandlimit_interp}}
    from which we can obtain the signal estimate $\hbs = \bbU_{\ccalS} \hat{\tbs}_{\ccalS}$.

    In classical deterministic approaches, where we do not permit randomness and therefore $\sigma_v^2 \diag(\bbp_{\ccalS}) = \bbzero$ in~\eqref{eq:bandlimit_interp}, we recover the estimate $\hat{\tbs}$ while discarding elements of the partially observed signal $\bar{\bbs}$ that do not explicitly contribute to the active subspace $\bbU_{\ccalS}$.
    However, since we allow for stochasticity in~\eqref{eq:bandlimit_interp} due to both additive noise $\bbv$ and the variance of the stationary signal $\bbs$, we can account for the uncertainty in $\bar{\bbs}$ and obtain the most likely signal estimate $\hbs$ given the level of noise in the observations via $\sigma_v^2$ and the variance in the signal of interest via $\bbp_{\ccalS}$.
    Moreover, our stochastic approach enjoys further practicality since it is flexible to scenarios where the signal $\bbs$ is almost but not completely restricted to $\bbU_{\ccalS}$.
\end{itemize}
The above are practical choices of $\bbC$ for signal reconstruction from an arbitrary subset of observed simplices for either a single order or across all orders.
However, signal reconstruction in the multiorder setting $\bbT = \bbD$ is particularly appealing for the simplicial complex setting.
As mentioned above, we may only be able to measure the signal at certain orders.
For example, say we observe node and triangle signals on a $2$-order simplicial complex, but we cannot measure values on edge, that is, $\bbTheta$ selects only entries of $\bbs$ corresponding to $0$- and $2$-simplices.
Since we observe no edge signals, we cannot use edge connectivity via faces and cofaces to approximate the unknown values, so we are completely reliant on the signal at adjacent orders to decide what the edge signal values in $\bbs$ are.
This is only possible due to the Dirac operator $\bbD$ dictating the relationship between signal behavior across orders.

\subsection{Stationary Topological Signals under Gaussianity}
\label{S:GP_topo}

One important class of random topological signals is when the stationary signal $\bbs$ is Gaussian.
We specifically focus on the setting where the excitation $\bbw$ given to the topological filter $\bbH$ in~\eqref{eq:topo_stationarity_def} is Gaussian, but other models for Gaussian topological processes can be defined, such as~\cite{marinucci2025simplicial}.
Our case is analogous to the classical scenarios of Gaussianity for stationary processes in other domains, while we explore when the underlying domain is a simplicial complex.

If the input white noise $\bbw \sim \ccalN(\bbzero, \bbI)$ is Gaussian, then by the linearity in~\eqref{eq:topo_stationarity_def}, our signal $\bbs = \bbH \bbw$ is also Gaussian, where
\alna{
    \bbs \sim \ccalN(\bbzero, \bbC),
    \quad
    \bbC = \bbH^2,
\label{eq:pdf_gaussian_topo}}
and we recall that $\bbH = \sum_{r=0}^{R-1} h_r \bbT^r$ for the operator $\bbT \in \{ \bbL_k, \bbD \}$, which determines if we are interested in signals at single order $k$ or multiorder data.
Then, the PDF of $\bbs$ is
\alna{
    f_{\bbs}(\bbx)
    &~=~&
    \frac{ (2\pi)^{2/N} }{\det(\bbH) }
    \exp\bigg\{
        -\frac{1}{2} \bbx^\top \bbH^{-2} \bbx
    \bigg\},
\label{eq:pdf_topo_spatial}}
since $\bbC = \bbH^2$.
Moreover, by Definitions~\ref{D:WeaklyStationaryTopoProcess} and~\ref{def:topo_psd}, we recall that $\bbC = \bbU \diag(\bbp) \bbU^\top = \bbU\diag^2(h(\bblambda)\bbU^\top$ for filter frequency response $h(\lambda) = \sum_{r=0}^{R-1} h_r \lambda^r$, and $\bbU$ as the eigenvectors of $\bbT$.
Thus, as usual for stationary topological signals, the second-order statistical behavior of $\bbs$ is completely described by the PSD $\bbp$, and the simplicial complex structure is incorporated via $\bbU$.
However, more importantly, if $\bbs$ is Gaussian, then the frequency components of $\bbs$, that is, the entries of $\tbs = \bbU^\top \bbs$, are not only uncorrelated but also independent.
To see this, recall that Corollary~\ref{C:tft_cov} shows that $\tbs$ has the covariance $\tbC = \diag(\bbp)$.
Furthermore, the TFT is a linear operation, 
so $\tbs \sim \ccalN\big(\bbzero,\diag(\bbp)\big)$ also consists of Gaussian variables with PDF
\alna{
    f_{\tbs}(\tbx)
    =
    \prod_{i=1}^N
    \frac{1}{\sqrt{2\pi p_i}}
    \exp\bigg\{ -\frac{\tilde{x}_i^2}{ 2p_i } \bigg\}.
\label{eq:pdf_topo_freq}}
Thus, simplicial structure influences the covariance of $\bbs$ in the spatial domain via $\bbU$, but the frequency components $\tbs$ are independent with variances determined by $\bbp$.

When we want to estimate the covariance $\bbC$ from independent realizations $\bbS \in \reals^{N \times M}$, the sample covariance $\hbC$ in~\eqref{eq:sample_cov} is known to be the maximum likelihood estimate (MLE) of $\bbC$. 
However, we can also take the approach in Section~\ref{S:estimators} and obtain the most likely estimate of $\bbC$ while imposing prior structural information.
For example, we may consider Gaussian topological signals that follow an MA or AR model in~\eqref{eq:signal_MA} or~\eqref{eq:topo_AR_gen} for a parametric approach.
If $\bbC^{-1} \in \{ \bbL_k, \bbD^2 \}$, we may impose sparsity on $\bbC^{-1}$ for parsimonious simplex interdependencies or encourage a low-rank $\bbC^{-1}$ for underlying simplicity.
As one example, let $\bbs$ be an AR process, where $\big(\bbC^{\alpha}(\bbalpha)\big)^{-1} = \bbH^2(\bbalpha)$ as in~\eqref{eq:topo_AR_filter_form}.
Then, instead of obtaining the vanilla MLE as the sample covariance $\hbC$, we can consider the regularized optimization problem
\alna{
    \hbalpha
    ~=~
    & \argmin\nolimits_{\bbalpha} & ~~
    -\log \det( \bbTheta_{\bbalpha} )
    +
    \tr(\hbC \bbTheta_{\bbalpha} )
    +
    \lambda \ccalR(\bbTheta_{\bbalpha})
&\nonumber\\&
    &~~{\rm s.t.}~~& ~~
    \bbTheta_{\bbalpha} = \bbH^2(\bbalpha) \succ 0, ~~
    \bbTheta_{\bbalpha} \in \ccalD,
\label{eq:sc_lasso}}
where $\lambda > 0$ is some penalty weight, $\ccalR(\bbTheta_{\bbalpha})$ contains regularizers promoting desirable structural properties of $\bbTheta_{\bbalpha}$ such as sparsity, and $\ccalD$ ensures that $\bbTheta_{\bbalpha}$ follows known characteristics, such as having off-diagonal entries restricted to the supports of $\bbT$ and $\bbT^2$ if $R = 1$, in which case $\bbTheta_{\bbalpha} = \bbH^2(\bbalpha) = \bbI - 2 \alpha \bbT + \alpha^2 \bbT^2$.

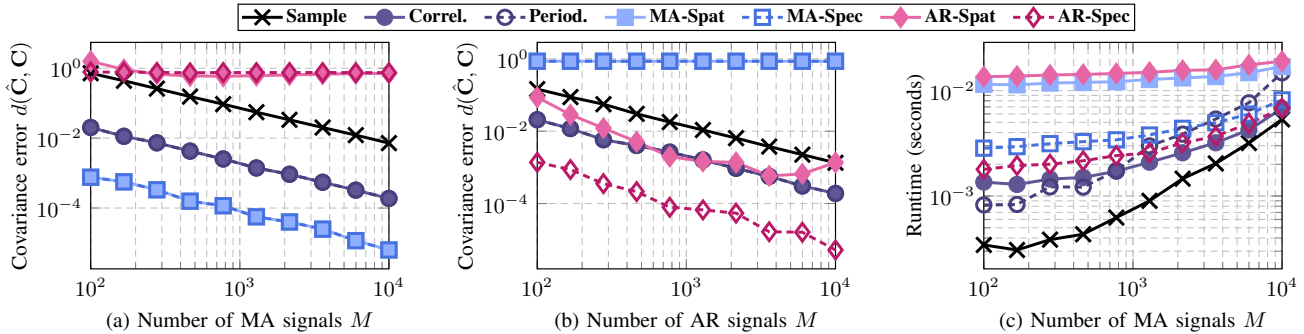
\begin{figure*}[t]
    \centering
    \begin{minipage}[b][][b]{0.30\textwidth}
        \centering
        \scalebox{.95}{\begin{tikzpicture}[baseline,scale=.95,trim axis left, trim axis right]

\pgfplotstableread{images/exp1_MA_table.csv}\errtable

\begin{loglogaxis}[
    xlabel={(a) Number of MA signals $M$},
    xmin=100,
    xmax=10000,
    ylabel={Covariance error $d(\hbC, \bbC)$},
    grid style=densely dashed,
    grid=both,
    legend style={
        at={(.5, 1.02)},
        anchor=south},
    legend columns=3,
    width=170,
    height=140,
    ]

    \addplot[clrsamp, mark=x, mark size=3pt, solid] 
        table [x=x, y=err_samp] {\errtable};
    
    \addplot[clrcorr, mark=*, mark size=2pt, solid] 
        table [x=x, y=err_cor] {\errtable};
    
    \addplot[clrperi, mark=o, mark size=2pt, densely dashed] 
        table [x=x, y=err_per] {\errtable};
    
    \addplot[clrmaspat, mark=square*, mark size=2pt, solid] 
        table [x=x, y=err_maspat] {\errtable};
    
    \addplot[clrmaspec, mark=square, mark size=2pt, densely dashed] 
        table [x=x, y=err_maspec] {\errtable};
    
    \addplot[clrarspat, mark=diamond*, mark size=2.5pt, solid] 
        table [x=x, y=err_arspat] {\errtable};
    
    \addplot[clrarspec, mark=diamond, mark size=2.5pt, densely dashed] 
        table [x=x, y=err_arspec] {\errtable};
    

\end{loglogaxis}

\end{tikzpicture}}
    \end{minipage} 
    \hspace{.1cm}
    \begin{minipage}[b][][b]{0.30\textwidth}
        \centering
        \scalebox{.95}{\begin{tikzpicture}[baseline,scale=.95,trim axis left, trim axis right]

\pgfplotstableread{images/exp1_AR_table.csv}\errtable

\begin{loglogaxis}[
    xlabel={(b) Number of AR signals $M$},
    xmin=100,
    xmax=10000,
    ylabel={Covariance error $d(\hbC, \bbC)$},
    grid style=densely dashed,
    grid=both,
    legend style={
        at={(.5, 1.05)},
        anchor=south},
    legend columns=7,
    width=170,
    height=140,
    ]

    \addplot[clrsamp, mark=x, mark size=3pt, solid] 
        table [x=x, y=err_samp] {\errtable};
    
    \addplot[clrcorr, mark=*, mark size=2pt, solid] 
        table [x=x, y=err_cor] {\errtable};
    
    \addplot[clrperi, mark=o, mark size=2pt, densely dashed] 
        table [x=x, y=err_per] {\errtable};
    
    \addplot[clrmaspat, mark=square*, mark size=2pt, solid] 
        table [x=x, y=err_maspat] {\errtable};
    
    \addplot[clrmaspec, mark=square, mark size=2pt, densely dashed] 
        table [x=x, y=err_maspec] {\errtable};
    
    \addplot[clrarspat, mark=diamond*, mark size=2.5pt, solid] 
        table [x=x, y=err_arspat] {\errtable};
    
    \addplot[clrarspec, mark=diamond, mark size=2.5pt, densely dashed] 
        table [x=x, y=err_arspec] {\errtable};
    
    \legend{ 
            {\bf Sample}~~~,
            {\bf Correl.}~~~,
            {\bf Period.}~~~,
            {\bf MA-Spat}~~~,
            {\bf MA-Spec}~~~,
            {\bf AR-Spat}~~~,
            {\bf AR-Spec},
            }

\end{loglogaxis}

\end{tikzpicture}}
    \end{minipage} 
    \hspace{.1cm}
    \begin{minipage}[b][][b]{0.30\textwidth}
        \centering
        \scalebox{.95}{\begin{tikzpicture}[baseline,scale=.95,trim axis left, trim axis right]

\pgfplotstableread{images/exp1_MA_table.csv}\errtable

\begin{loglogaxis}[
    xlabel={(c) Number of MA signals $M$},
    xmin=100,
    xmax=10000,
    ylabel={Runtime (seconds)},
    grid style=densely dashed,
    grid=both,
    legend style={
        at={(.5, 1.02)},
        anchor=south},
    legend columns=3,
    width=170,
    height=140,
    ]

    \addplot[clrsamp, mark=x, mark size=3pt, solid] 
        table [x=x, y=time_samp] {\errtable};
    
    \addplot[clrcorr, mark=*, mark size=2pt, solid] 
        table [x=x, y=time_cor] {\errtable};
    
    \addplot[clrperi, mark=o, mark size=2pt, densely dashed] 
        table [x=x, y=time_per] {\errtable};
    
    \addplot[clrmaspat, mark=square*, mark size=2pt, solid] 
        table [x=x, y=time_maspat] {\errtable};
    
    \addplot[clrmaspec, mark=square, mark size=2pt, densely dashed] 
        table [x=x, y=time_maspec] {\errtable};
    
    \addplot[clrarspat, mark=diamond*, mark size=2.5pt, solid] 
        table [x=x, y=time_arspat] {\errtable};
    
    \addplot[clrarspec, mark=diamond, mark size=2.5pt, densely dashed] 
        table [x=x, y=time_arspec] {\errtable};
    

\end{loglogaxis}

\end{tikzpicture}}
    \end{minipage} 
    \caption{
    Covariance estimation error as the number of observed signals $M$ increases for (a) MA topological signals and (b) AR topological signals.
    (c) Runtime comparison in seconds as the number of observed signals $M$ increases.
    }
    \label{fig:exp1_cov}
\end{figure*}

\subsection{Additional Directions}
\label{S:future_apps}

The examples discussed in this section are only a subset of the problems for which topological stationarity is either useful or interesting.
We share additional directions that warrant exploration in future work.

	\noindent\textbf{Time-varying topological signals.}  
    While many problems consider analyses and applications of static data, it is extremely common for data to vary over time.
    Thus, as one natural extension, we can implement our existing stationarity-based framework for time-dependent random topological signals.
    Potential adaptations for this setting include definitions of state-space representations and generalizations of dynamic filtering approaches for signals on simplicial complexes.

	\noindent\textbf{Robust covariance estimation.}  
    Another challenging yet realistic scenario occurs when we estimate the covariance of a stationary process from observations that are incomplete or corrupted by noise.
    In this work, we presented approaches to denoise or interpolate topological signals, which implies a simple two-step process, where we first estimate the covariance matrix and, then, use it to enhace our estimation.
    However, these steps are completely independent and may miss critical relationships that could be beneficial for estimation.
    This motivates \emph{robust} covariance estimators, particularly if we ensure that computation remains feasible or if we can handle common yet challenging scenarios, such as the presence of outliers.

	\noindent\textbf{Beyond simplicial complexes.}  
    While simplicial complexes certainly yield rich domains, we may also consider generalizing our framework to other higher-order structures, such as hypergraphs, cell complexes, or non-simplicial interactions.
    Recent progress in this direction extended notions of linear filtering to these settings~\cite{roddenberry2022signal,sardellitti2024topological}, which can naturally be followed by further generalizations of stationarity.
    However, without the structure imposed by the inclusion property, problems on more general objects often become ill-posed, so extensions would likely require much more careful analysis.
	
	\noindent\textbf{Simplicial complex estimation.}  
    A fundamental task in graph SP is estimating the graph structure from a set of observed signals, which includes methods that assume stationarity~\cite{dong2016learning, segarra2017network}.
    Thus, another natural direction is recovering a simplicial complex from stationary topological signals.
    For example, we could estimate the incidence matrices $\{ \bbB_k \}_{k=1}^K$, from which we can build the Dirac operator $\bbD$ or Hodge Laplacians $\{ \bbL_k \}_{k=0}^K$, or we can directly estimate $\bbD$ or $\bbL_k$.
    While the extension is straightforward conceptually, recovering the underlying simplicial structure would require much more consideration than the graph setting to ensure valid estimates, such as ensuring that $\bbB_k \bbB_{k+1} = \bbzero$.
    
	\noindent\textbf{Random simplicial complexes.}  
    Finally, even if we estimate or build a simplicial complex from data, we may not be certain that these reflect the underlying relationships among nodes.
    In addition to being able to construct simplicial complexes, we may often require the ability to model uncertainty not only in the signals but also the simplices connecting sets of nodes.
    Thus, another valuable direction for topological SP is probabilistic modeling for simplicial complexes.

\begin{figure*}[t]
    \centering
    \begin{minipage}[b][][b]{0.30\textwidth}
        \centering
        \scalebox{.95}{\begin{tikzpicture}[baseline,scale=.95,trim axis left, trim axis right]

\pgfplotstableread{images/exp2_cov_err_table.csv}\errtable

\begin{loglogaxis}[
    xlabel={(a) SNR (dB) \,\textendash\, MA signals},
    xmin=1,
    xmax=30,
    ylabel={Covariance error $d(\hbC, \bbC)$},
    grid style=densely dashed,
    grid=both,
    legend style={
        at={(.20, 1.05)},
        anchor=south west},
    legend columns=4,
    width=170,
    height=140,
    ]

    \addplot[clrsamp, mark=x, mark size=3pt, solid] 
        table [x=x, y=ma_samp] {\errtable};
    
    \addplot[clrcorr, mark=*, mark size=2pt, solid] 
        table [x=x, y=ma_corr] {\errtable};
    
    \addplot[clrperi, mark=o, mark size=2pt, densely dashed] 
        table [x=x, y=ma_peri] {\errtable};
    
    \addplot[clrmaspat, mark=square*, mark size=2pt, solid] 
        table [x=x, y=ma_maspat] {\errtable};
    
    \addplot[clrmaspec, mark=square, mark size=2pt, densely dashed] 
        table [x=x, y=ma_maspec] {\errtable};
    
    \addplot[clrarspat, mark=diamond*, mark size=2.5pt, solid] 
        table [x=x, y=ma_arspat] {\errtable};
    
    \addplot[clrarspec, mark=diamond, mark size=2.5pt, densely dashed] 
        table [x=x, y=ma_arspec] {\errtable};
    
    \addplot[clrwiener, mark=triangle*, mark size=2.5pt, solid] 
        table [x=x, y=ma_wiener] {\errtable};
    
    \legend{ 
            {\bf Sample}~~~,
            {\bf Correl.}~~~,
            {\bf Period.}~~~,
            {\bf MA-Spat}~~~,
            {\bf MA-Spec}~~~,
            {\bf AR-Spat}~~~,
            {\bf AR-Spec},
            {\bf Wiener},
            }

\end{loglogaxis}

\end{tikzpicture}}
    \end{minipage} 
    \hspace{.1cm}
    \begin{minipage}[b][][b]{0.30\textwidth}
        \centering
        \scalebox{.95}{\begin{tikzpicture}[baseline,scale=.95,trim axis left, trim axis right]

\pgfplotstableread{images/exp2_cov_err_table.csv}\errtable

\begin{loglogaxis}[
    xlabel={(b) SNR (dB) \,\textendash\, AR signals},
    xmin=1,
    xmax=30,
    ylabel={Covariance error $d(\hbC, \bbC)$},
    grid style=densely dashed,
    grid=both,
    legend style={
        at={(.55, 1.02)},
        anchor=south west},
    legend columns=4,
    width=170,
    height=140,
    ]

    \addplot[clrsamp, mark=x, mark size=3pt, solid] 
        table [x=x, y=ar_samp] {\errtable};
    
    \addplot[clrcorr, mark=*, mark size=2pt, solid] 
        table [x=x, y=ar_corr] {\errtable};
    
    \addplot[clrperi, mark=o, mark size=2pt, densely dashed] 
        table [x=x, y=ar_peri] {\errtable};
    
    \addplot[clrmaspat, mark=square*, mark size=2pt, solid] 
        table [x=x, y=ar_maspat] {\errtable};
    
    \addplot[clrmaspec, mark=square, mark size=2pt, densely dashed] 
        table [x=x, y=ar_maspec] {\errtable};
    
    \addplot[clrarspat, mark=diamond*, mark size=2.5pt, solid] 
        table [x=x, y=ar_arspat] {\errtable};
    
    \addplot[clrarspec, mark=diamond, mark size=2.5pt, densely dashed] 
        table [x=x, y=ar_arspec] {\errtable};
    
    \addplot[clrwiener, mark=triangle*, mark size=2.5pt, solid] 
        table [x=x, y=ar_wiener] {\errtable};
    

\end{loglogaxis}

\end{tikzpicture}}
    \end{minipage} 
    \hspace{.1cm}
    \begin{minipage}[b][][b]{0.30\textwidth}
        \centering
        \scalebox{.95}{\begin{tikzpicture}[baseline,scale=.95,trim axis left, trim axis right]

\pgfplotstableread{images/exp2_sig_err_table.csv}\errtable

\begin{loglogaxis}[
    xlabel={(c) SNR (dB)},
    xmin=1,
    xmax=30,
    ylabel={Signal error $d(\hbS, \bbS)$},
    grid style=densely dashed,
    grid=both,
    legend style={
        at={(.5, 1.05)},
        anchor=south},
    legend columns=2,
    width=170,
    height=140,
    ]

    \addplot[clrwiener, mark=*, mark size=2pt, solid] 
        table [x=x, y=ma_wiener] {\errtable};
    
    \addplot[clrnoisy, mark=square*, mark size=2pt, solid] 
        table [x=x, y=ma_noisy] {\errtable};
    
    \addplot[clrwiener, mark=o, mark size=2pt, densely dashed] 
        table [x=x, y=ar_wiener] {\errtable};
    
    \addplot[clrnoisy, mark=square, mark size=2pt, densely dashed] 
        table [x=x, y=ar_noisy] {\errtable};

    \legend{ 
            {\bf Wiener (MA)}~~~,
            {\bf Noisy (MA)}~~~,
            {\bf Wiener (AR)}~~~,
            {\bf Noisy (AR)},
            }

\end{loglogaxis}

\end{tikzpicture}}
    \end{minipage} 
    \caption{
    Covariance estimation error from noisy or filtered signals as SNR varies for (a) MA topological signals and (b) AR topological signals.
    (c) Denoising error as the SNR varies for both MA and AR topological signals.
    }
    \label{fig:exp2_cov}
\end{figure*}
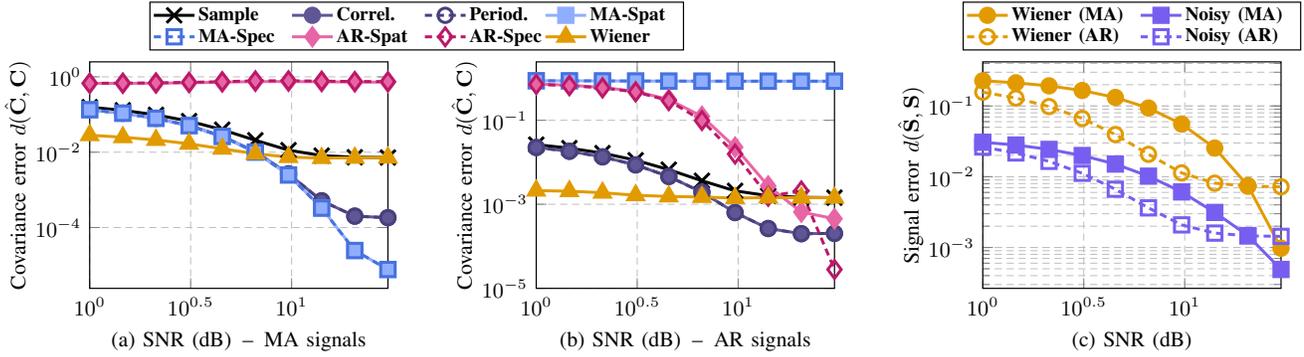

\section{Numerical results} \label{S:exps}

To complete the presentation of our topological stationarity framework and make concrete the introduced concepts, we provide numerical experiments showcasing covariance estimation and signal recovery.
For the following experiments, we measure error via $d(\hbX,\bbX) = \| \hbX - \bbX\|_F^2 / \| \bbX \|_F^2$, where $\hbX$ denotes an estimate and $\bbX$ is the target.
All results reflect median values of measurements over 50 independent trials.

\subsection{Covariance Estimation}
\label{sub:cov_est}

We first consider covariance estimation from stationary topological signals as discussed in Section~\ref{S:estimators}, where we observe $M$ signals $\bbS \in \reals^{N \times M}$, and we compare methods to estimate the corresponding covariance matrix $\bbC$.
For nonparametric methods, we consider 
the sample covariance in~\eqref{eq:sample_cov} (denoted {\bf Sample}), 
the correlogram estimator in~\eqref{eq:est_cov_cg} (denoted {\bf Cor})
and
the periodogram estimator in~\eqref{eq:est_cov_pg} (denoted {\bf Per}).
For parametric estimators, methods denoted {\bf MA} or {\bf AR} assume $\bbs$ is an MA or AR process, respectively.
For MA processes, {\bf MA-Spat} refers to estimation in the spatial domain~\eqref{eq:opt_gamma_topo} and {\bf MA-Spec} in the spectral domain~\eqref{eq:opt_gamma_freq_topo}.
Similarly, {\bf AR-Spat} and {\bf AR-Spec} represent the estimates~\eqref{eq:opt_eta_topo} and~\eqref{eq:opt_eta_freq_topo}.

We generate simplicial complexes of order $K=2$ with $N_0 = 50$ nodes, where we sample an edge for every pair of nodes with probability $0.2$ and a triangle for every triplet of edges with probability $0.3$.
From each synthetic simplicial complex, we generate multiorder stationary samples from either the MA model in~\eqref{eq:signal_MA} or the AR model~\eqref{eq:topo_AR_gen}, with $\bbT = \bbD$ and $\beta_r = \alpha_r = 0.1$ for every $r \in[R]$ and $R = 3$.
We estimate the covariance matrix $\bbC$ of the signals as the number of observations increases from $M = 10^2$ to $M = 10^4$, and we measure the resulting error (distance) per method, visualized for both signal models in Figs.~\ref{fig:exp1_cov}a \& \ref{fig:exp1_cov}b.
Unsurprisingly, methods that assume the correct signal model outperform other approaches; {\bf MA} and {\bf AR} methods achieve the lowest errors in Figs.~\ref{fig:exp1_cov}a  \& \ref{fig:exp1_cov}b, respectively.
Similarly, assuming an incorrect model attains higher error, with {\bf MA} approaches attaining the worst error for AR signals and vice versa.
However, while {\bf MA-Spat} and {\bf MA-Spec} yield equivalent solutions since the objectives of the problems~\eqref{eq:opt_gamma_topo} and~\eqref{eq:opt_gamma_freq_topo} return equivalent values, recall that~\eqref{eq:opt_eta_topo} and~\eqref{eq:opt_eta_freq_topo} do not.
Thus, we also find that {\bf AR-Spec} outperforms {\bf AR-Spat} since the spatial approach in~\eqref{eq:opt_eta_topo} requires far more samples for sufficient estimation relative to~\eqref{eq:opt_eta_freq_topo}.
Furthermore, recall that {\bf Cor} and {\bf Per} yield equivalent estimates, hence their identical performance in both Figs.~\ref{fig:exp1_cov}a  \& \ref{fig:exp1_cov}b, yet the periodogram estimate in~\eqref{eq:est_cov_pg} is more efficient for low-sample regimes.
To show this, we plot the runtime for each method in Fig.~\ref{fig:exp1_cov}c for MA signals, as the runtime comparison across methods is equivalent regardless of signal model.
As discussed in Section~\ref{Ss:nonparam_estimators}, {\bf Per} is faster than {\bf Cor} for low $M$, but when $M$ is large enough, the correlogram estimator {\bf Cor} becomes more efficient.
Moreover, as reflected in Fig.~\ref{fig:exp1_cov}c, recall that parametric estimation in the frequency domain as in {\bf MA-Spec} and {\bf AR-Spec} is more efficient than their spatial counterparts, despite the equivalent solutions of {\bf MA-Spat} and {\bf MA-Spec}.

\subsection{Topological Signal Denoising}\label{sub:TSD}

We consider signal denoising from Section~\ref{S:Wiener_topo}, where we wish to recover $M = 10^4$ multiorder signals $\bbS \in \reals^{N \times M}$ from noisy observations $\bbY = \bbS + \bbV$ with independent white Gaussian noise $V_{im} \sim \ccalN(0, \sigma_v^2)$ for different values of $\sigma_v^2$.
We generate simplicial complexes and their stationary data as before for MA and AR models, and we estimate the underlying covariance from the noisy signals $\bbY$ using all methods in Section~\ref{sub:cov_est}.
We further approximate the covariance via~\eqref{eq:sample_cov} from \emph{filtered} signals $\hbS$, obtained from Wiener filtering in~\eqref{eq:Wiener_spectral}.
Figs.~\ref{fig:exp2_cov}a  \& \ref{fig:exp2_cov}b plot the covariance estimation error for MA and AR signals, respectively, as the signal-to-noise ratio (SNR) increases, that is, $\sigma_v^2$ decreases. 
We observe that Wiener filtering is superior when the SNR is lower and $\bbY$ is noisier, as we can attenuate frequency components that are particularly noisy via~\eqref{eq:Wiener_spectral}, whereas even methods that assume the correct signal model can overfit to corrupted data.

We also directly measure the signal reconstruction error in Fig.~\ref{fig:exp2_cov}c, where we compare the error between $\bbY$ and $\bbS$ (denoted {\bf Noisy}) versus the error between $\hbS$ and $\bbS$ (denoted {\bf Wiener}), for both signal models.
As expected, Wiener filtering consistently returns a lower signal error and therefore improves our noisy observations.
Moreover, we also observe a larger gap between {\bf Wiener} and {\bf Noisy} for AR signals at higher SNR versus MA signals, as the AR model is more complex.

\subsection{Topological Signal Interpolation}\label{sub:TSI}

We next compare regularizers for topological signal interpolation from Section~\ref{S:reconstruction_topo}.
For a set of $M = 10^4$ multiorder simplicial signals $\bbS \in \reals^{N \times M}$, we observe noisy values for only a subset of $P < N$ simplices $\bar{\bbS} = \bbTheta \bbS + \bbV$, where $V_{im} \sim \ccalN(0,\sigma_v^2)$ for $\sigma_v^2 = 0.01$.
We sample MA signals with $R = 3$ and $\beta_r = 0.3$ for all $r \in [R]$, AR signals with $R = 1$ and $\alpha = 0.3$, and \emph{low-pass} stationary signals.
More specifically, we generate low-pass data via filtering as in~\eqref{eq:topo_stationarity_def} for $\bbH = \bbU\diag(h_{\rm lp}(\bblambda))\bbU^\top$ with $\bblambda$, $\bbU$ as the eigenvalues and eigenvectors of $\bbD$, respectively, and $h_{\rm lp} (\lambda) = (\lambda^2 + 10^{-3})^{-1}$.
We then perform topological signal recovery given the partial observations $\bar{\bbS}$ to compare methods in Section~\ref{S:reconstruction_topo}, where {\bf MAP} corresponds to~\eqref{eq:interp_closed_form} given the true covariance matrix $\bbC$, {\bf Smooth} solves~\eqref{eq:smooth_interp}, {\bf SEM} solves~\eqref{eq:sem_interp}, and {\bf Zero} assumes that $\bbC=\bbI$ and, as a result, implements $\bbTheta^\top \bar{\bbS}$, filling the unobserved signals with zeros.
In Fig.~\ref{fig:exp3_signal}, we show the signal reconstruction error for all three signal models across different ratios of observed simplices $P / N$.
We observe that for MA processes in Fig.~\ref{fig:exp3_signal}a, the methods {\bf Smooth}, {\bf SEM}, and {\bf Zero} with prior structural assumptions achieve a significantly higher error than {\bf MAP} with the true $\bbC$ since their assumptions do not capture the complexity of the order-$3$ topological filter in~\eqref{eq:topo_FIR_filter}.
However, in Fig.~\ref{fig:exp3_signal}b, the parametric approach {\bf SEM} given the parameter $\alpha \in \reals$ is equivalent to {\bf MAP} since the scalar $\alpha$ completely describes the AR signal covariance, but we need not invert $\bbC$ as in~\eqref{eq:interp_closed_form}.
Finally, for low-pass signals in Fig.~\ref{fig:exp3_signal}c, {\bf SEM} attains the highest error as it assumes a high-pass filter in~\eqref{eq:topo_AR_gen}, but {\bf Smooth} performs closest to the optimal {\bf MAP}, as {\bf Smooth} encourages more low-pass behavior in the estimated signals.

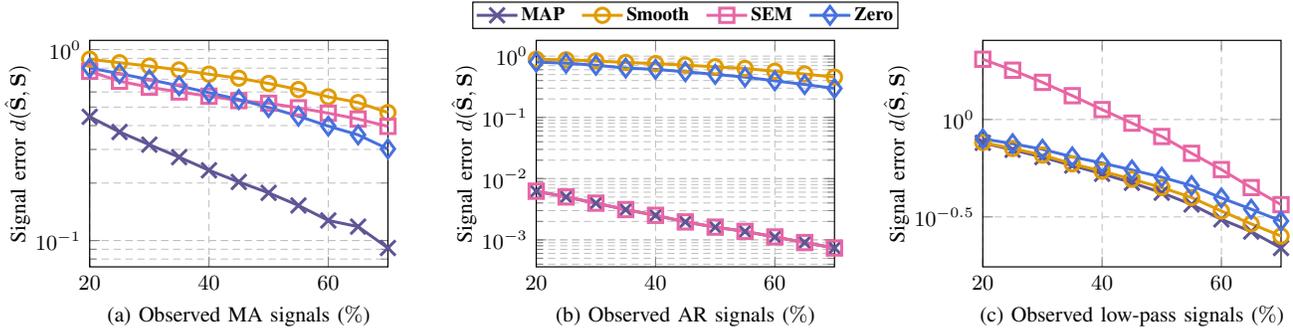
\begin{figure*}[t]
    \centering
    \begin{minipage}[b][][b]{0.30\textwidth}
        \centering
        \scalebox{.95}{\begin{tikzpicture}[baseline,scale=.95,trim axis left, trim axis right]

\pgfplotstableread{images/exp3_table.csv}\errtable

\begin{semilogyaxis}[
    xlabel={(a) Observed MA signals ($\%$)},
    xmin=20,
    xmax=70,
    ylabel={Signal error $d(\hbS, \bbS)$},
    grid style=densely dashed,
    grid=both,
    legend style={
        at={(.5, 1.02)},
        anchor=south},
    legend columns=4,
    width=170,
    height=140,
    ]

    \addplot[clrcorr, mark=x, mark size=3pt, solid] 
        table [x=x, y=ma_map] {\errtable};
    
    \addplot[clrwiener, mark=o, mark size=2pt, solid] 
        table [x=x, y=ma_smooth] {\errtable};
    
    \addplot[clrarspat, mark=square, mark size=2pt, solid] 
        table [x=x, y=ma_ar1] {\errtable};
    
    \addplot[clrmaspec, mark=diamond, mark size=2.5pt, solid] 
        table [x=x, y=ma_zero] {\errtable};
    

\end{semilogyaxis}

\end{tikzpicture}}
    \end{minipage} 
    \hspace{.1cm}
    \begin{minipage}[b][][b]{0.30\textwidth}
        \centering
        \scalebox{.95}{\begin{tikzpicture}[baseline,scale=.95,trim axis left, trim axis right]

\pgfplotstableread{images/exp3_table.csv}\errtable

\begin{semilogyaxis}[
    xlabel={(b) Observed AR signals ($\%$)},
    xmin=20,
    xmax=70,
    ylabel={Signal error $d(\hbS, \bbS)$},
    grid style=densely dashed,
    grid=both,
    legend style={
        at={(.5, 1.05)},
        anchor=south},
    legend columns=4,
    width=170,
    height=140,
    ]

    \addplot[clrcorr, mark=x, mark size=3pt, solid] 
        table [x=x, y=ar1_map] {\errtable};
    
    \addplot[clrwiener, mark=o, mark size=2pt, solid] 
        table [x=x, y=ar1_smooth] {\errtable};
    
    \addplot[clrarspat, mark=square, mark size=2pt, solid] 
        table [x=x, y=ar1_ar1] {\errtable};
    
    \addplot[clrmaspec, mark=diamond, mark size=2.5pt, solid] 
        table [x=x, y=ar1_zero] {\errtable};
    
    \legend{ 
            {\bf MAP}~~~,
            {\bf Smooth}~~~,
            {\bf SEM}~~~,
            {\bf Zero},
            }

\end{semilogyaxis}

\end{tikzpicture}}
    \end{minipage} 
    \hspace{.1cm}
    \begin{minipage}[b][][b]{0.30\textwidth}
        \centering
        \scalebox{.95}{\begin{tikzpicture}[baseline,scale=.95,trim axis left, trim axis right]

\pgfplotstableread{images/exp3_table.csv}\errtable

\begin{semilogyaxis}[
    xlabel={(c) Observed low-pass signals ($\%$)},
    xmin=20,
    xmax=70,
    ylabel={Signal error $d(\hbS, \bbS)$},
    grid style=densely dashed,
    grid=both,
    legend style={
        at={(.5, 1.02)},
        anchor=south},
    legend columns=4,
    width=170,
    height=140,
    ]

    \addplot[clrcorr, mark=x, mark size=3pt, solid] 
        table [x=x, y=smooth_map] {\errtable};
    
    \addplot[clrwiener, mark=o, mark size=2pt, solid] 
        table [x=x, y=smooth_smooth] {\errtable};
    
    \addplot[clrarspat, mark=square, mark size=2pt, solid] 
        table [x=x, y=smooth_ar1] {\errtable};
    
    \addplot[clrmaspec, mark=diamond, mark size=2.5pt, solid] 
        table [x=x, y=smooth_zero] {\errtable};
    

\end{semilogyaxis}

\end{tikzpicture}}
    \end{minipage} 
    \caption{
    Reconstruction error as the percentage of observed simplices increases for (a) MA topological signals, (b) AR topological signals, and (c) low-pass topological signals.
    }
    \label{fig:exp3_signal}
\end{figure*}

\begin{figure*}[b]
    \centering
    \begin{minipage}[b][][b]{0.30\textwidth}
        \centering
        \scalebox{.95}{\begin{tikzpicture}[baseline,scale=.95,trim axis left, trim axis right]

\pgfplotstableread{images/exp4adenoising_error_table.csv}\errtable

\begin{semilogyaxis}[
    xlabel={(a) SNR (dB) \,\textendash\, Keyword signals},
    xmin=1,
    xmax=10,
    ylabel={Signal error $d(\hbS, \bbS)$},
    grid style=densely dashed,
    grid=both,
    legend style={
        at={(.5, 1.05)},
        anchor=south},
    legend columns=3,
    width=170,
    height=140,
    ]

    \addplot[clrcorr, mark=x, mark size=3pt, solid] 
        table [x=x, y=errn] {\errtable};
    
    \addplot[clrwiener, mark=o, mark size=2pt, solid] 
        table [x=x, y=errsm] {\errtable};
    
    \addplot[clrarspat, mark=square, mark size=2pt, solid] 
        table [x=x, y=errwf] {\errtable};
    
    \legend{ 
            {\bf Noisy}~~~,
            {\bf Smooth}~~~,
            {\bf Wiener},
            }

\end{semilogyaxis}

\end{tikzpicture}}
    \end{minipage} 
    \hspace{.1cm}
    \begin{minipage}[b][][b]{0.30\textwidth}
        \centering
        \scalebox{.95}{\begin{tikzpicture}[baseline,scale=.95,trim axis left, trim axis right]

\pgfplotstableread{images/exp4ainterp_error_table.csv}\errtable


\begin{semilogyaxis}[
    xlabel={(b) Observed signals ($\%$)},
    xmin=20,
    xmax=70,
    ylabel={Signal error $d(\hbS, \bbS)$},
    grid style=densely dashed,
    grid=both,
    legend style={
        at={(.5, 1.05)},
        anchor=south},
    legend columns=3,
    width=170,
    height=140,
    ]

    \addplot[clrcorr, mark=x, mark size=3pt, solid] 
        table [x=x, y=err00] {\errtable};
    
    \addplot[clrwiener, mark=o, mark size=2pt, solid] 
        table [x=x, y=err01] {\errtable};
    
    
    \addplot[clrarspat, mark=square, mark size=2pt, solid] 
        table [x=x, y=err1] {\errtable};
    

    \legend{ 
            $\gamma\!=\!0$~~~,
            $\gamma\!=\!0.1$~~~,
            $\gamma\!=\!1$,
            }

\end{semilogyaxis}

\end{tikzpicture}}
    \end{minipage} 
    \hspace{.1cm}
    \begin{minipage}[b][][b]{0.30\textwidth}
        \centering
        \scalebox{.95}{\begin{tikzpicture}[baseline,scale=.95,trim axis left, trim axis right]

\pgfplotstableread{images/exp4aordinterp_error_table.csv}\errtable


\begin{semilogyaxis}[
    xlabel={(b) Observed $1$-signals ($\%$)},
    xmin=0,
    xmax=50,
    ylabel={Signal error $d(\hbS, \bbS)$},
    grid style=densely dashed,
    grid=both,
    legend style={
        at={(.5, 1.05)},
        anchor=south},
    legend columns=3,
    width=170,
    height=140,
    ]

    \addplot[clrcorr, mark=x, mark size=3pt, solid] 
        table [x=x, y=err00] {\errtable};
    
    \addplot[clrwiener, mark=o, mark size=2pt, solid] 
        table [x=x, y=err01] {\errtable};
    
    \addplot[clrarspat, mark=square, mark size=2pt, solid] 
        table [x=x, y=err05] {\errtable};
    

    \legend{ 
            $\gamma\!=\!0$~~~,
            $\gamma\!=\!0.1$~~~,
            $\gamma\!=\!0.2$,
            }

\end{semilogyaxis}

\end{tikzpicture}}
    \end{minipage} 
    \caption{
    (a) Denoising error as the SNR varies for keyword signals.
    (b) Reconstruction error as the percentage of observed simplices increases for keyword signals.
    (c) Reconstruction error as the percentage of observed $1$-simplices increases for keyword signals.
    }
    \label{fig:exp4_real}
\end{figure*}
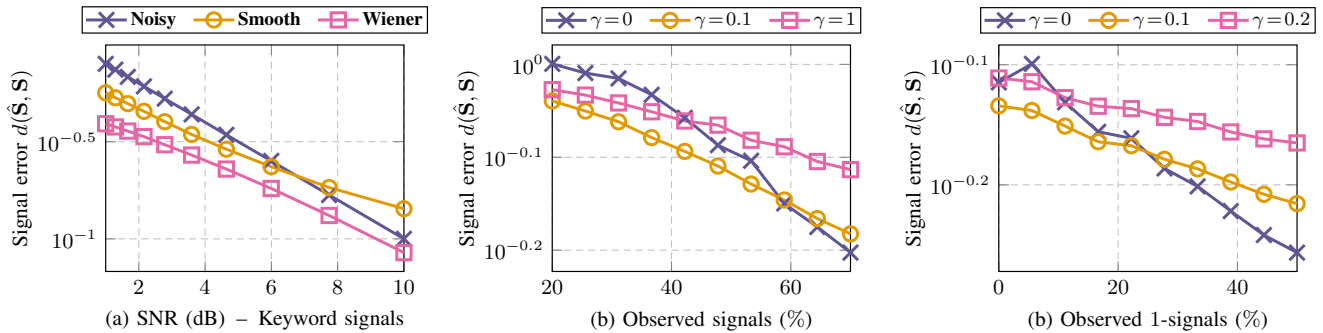

\subsection{Coauthorship Data}\label{sub:coautorship}

Finally, we consider a real-world setting with a simplicial complex of order $2$ that connects academic authors based on their collaborations.
Here, nodes represent authors, and edges and triangles connect researchers who have collaborated on at least one paper together.
We observe the average frequencies of using $M = 1903$ various keywords for each author as $0$-signals, and we build $1$- and $2$-signals by averaging the $0$-signals for authors connected by edges and triangles, respectively.
From this coauthorship data, we repeat simulations on signal denoising and interpolation for these keyword signals in Figure~\ref{fig:exp4_real} for $50$ independent trials each, where each trial corresponds to the random selection of an induced sub-simplicial complex of $N_0 = 50$ authors.

For Fig.~\ref{fig:exp4_real}a, we add white Gaussian noise $\bbV$ to the keyword signals $\bbS$ for corrupted signals $\bbY = \bbS + \bbV$ as in Section~\ref{sub:TSD} for different values of the noise variance $\sigma_v^2$.
We then compare the error between the noisy $\bbY$ and the true $\bbS$ versus the error of denoised signals $\hbS$ obtained via Wiener filtering~\eqref{eq:Wiener_spectral}.
Additionally, we compare to estimation under smoothness assumptions, that is, 
\alna{
    \hbS_{\rm smooth}
    =
    \argmin\nolimits_{\bbS}
    \| \bbS - \bbY \|_F^2 + \gamma \tr( \bbS^\top \bbD^2 \bbS ) 
\label{eq:smooth_denoising}}
for $\gamma = 0.1$.
We observe that smoothness can improve highly noisy observations, but without prior information dictating $\gamma$, the additional imposed structure can harm estimation.
Alternatively, the Wiener filtered signals $\hbS$ (denoted {\bf Wiener}) incorporate stochasticity in estimation and therefore achieve a lower error than both $\bbY$ (denoted {\bf Noisy}) and $\hbS_{\rm smooth}$ (denoted {\bf Smooth}) for all SNR values in Fig.~\ref{fig:exp4_real}a.

We also revisit signal interpolation by recovering the full signal values $\bbS$ given noisy observations on only a subset of simplices $\bar{\bbS} = \bbTheta\bbS + \bbV$ for $V_{im} \sim \ccalN(0,1)$, which we investigate as the ratio of observed simplices $P/N$ varies.
We interpolate $\bar{\bbS}$ by solving a modification of~\eqref{eq:interp_map_form}, 
\alna{
    \hbS
    =\!
    \argmin\nolimits_{\bbS}\!
    \| \bar{\bbS} - \bbTheta\bbS \|_F^2 + \tr(\bbS^\top \!\hbC^{-1}_{\rm pg} \bbS) + \gamma \tr(\bbS^\top \!\bbD^2 \bbS)
\label{eq:interp_stat_smooth}}
where we replace the unknown true covariance matrix $\bbC$ with $\hbC_{\rm pg}$ from~\eqref{eq:est_cov_pg}.
We also include a smoothness penalty and vary the weight $\gamma$ to demonstrate when the underlying signal smoothness helps estimation.
We again find that encouraging smoothness to an extent can improve estimation in Fig.~\ref{fig:exp4_real}b when a low ratio of simplices is observed, as $\gamma = 0.1$ attains a lower error than $\gamma \in \{ 0,1 \}$ when $P/N \leq 40$, but simply imposing stationarity $\gamma = 0$ is preferable when sufficiently many simplices are observed.

We also exemplify the scenario where we are missing data from $1$-simplices. 
We randomly drop varying numbers of $1$-signals for a set of $\bar{\bbS} \in \reals^{P \times M}$ partial observations with $P = N_0 + P_1 + N_2$ and $P_1 < N_1$.
Fig.~\ref{fig:exp4_real}c plots signal reconstruction error solving~\eqref{eq:interp_stat_smooth} for $\gamma \in \{0,0.1,0.2\}$, where $P_1 / N_1 = 0$ corresponds to the case where we recover $1$-signals without observing any edges.
We observe a similar trend as in Fig.~\ref{fig:exp4_real}b, where smoothness via $\gamma = 0.1$ can improve estimation when $P_1 / N_1$ is low, but too strong a penalty $\gamma = 0.2$ increases error.
Thus, prior structural knowledge can aid estimation when data is scarce, but the flexibility due to our stationarity assumption is sufficient when enough simplices are observed.

\section{Conclusion} \label{S:conclusion}

We introduced a probabilistic framework for random signals on simplicial complexes.
In particular, we defined stationarity topological processes to extend classical stationarity to a topological domain.
We explored the implications of stationarity for topological signals through spectral analyses of Hodge and Dirac operators, which led to the definition of the PSD of simplicial signals.
Thus, while our work follows the intuition of traditional SP models and techniques, we are able to model such data for random processes over complex, higher-order connections.
Furthermore, the algebraically friendly spectral properties of stationary topological processes allowed us to develop practical estimators of the signal covariance matrix and PSD, which are particularly useful in data-scarce regimes.
We also generalized classical tasks in SP such as signal denoising and interpolation for the simplicial complex domain.
We closed by discussing future directions, including extensions to more general topological structures and the application of our framework to generative or Bayesian models.

\bibliographystyle{IEEEtran}
\bibliography{mybib}

\begin{thebibliography}{10}
\providecommand{\url}[1]{#1}
\csname url@samestyle\endcsname
\providecommand{\newblock}{\relax}
\providecommand{\bibinfo}[2]{#2}
\providecommand{\BIBentrySTDinterwordspacing}{\spaceskip=0pt\relax}
\providecommand{\BIBentryALTinterwordstretchfactor}{4}
\providecommand{\BIBentryALTinterwordspacing}{\spaceskip=\fontdimen2\font plus
\BIBentryALTinterwordstretchfactor\fontdimen3\font minus \fontdimen4\font\relax}
\providecommand{\BIBforeignlanguage}[2]{{%
\expandafter\ifx\csname l@#1\endcsname\relax
\typeout{** WARNING: IEEEtran.bst: No hyphenation pattern has been}%
\typeout{** loaded for the language `#1'. Using the pattern for}%
\typeout{** the default language instead.}%
\else
\language=\csname l@#1\endcsname
\fi
#2}}
\providecommand{\BIBdecl}{\relax}
\BIBdecl

\bibitem{bick2023higher}
C.~Bick, E.~Gross, H.~A. Harrington, and M.~T. Schaub, ``What are higher-order networks?'' \emph{SIAM Rev.}, vol.~65, no.~3, pp. 686--731, 2023.

\bibitem{salnikov2018simplicial}
V.~Salnikov, D.~Cassese, and R.~Lambiotte, ``Simplicial complexes and complex systems,'' \emph{Eur. J. Physics}, vol.~40, no.~1, p. 014001, 2018.

\bibitem{barbarossa2020topological}
S.~Barbarossa and S.~Sardellitti, ``Topological signal processing over simplicial complexes,'' \emph{IEEE Trans. Signal Process.}, vol.~68, pp. 2992--3007, 2020.

\bibitem{schaub2021signal}
M.~T. Schaub, Y.~Zhu, J.-B. Seby, T.~M. Roddenberry, and S.~Segarra, ``Signal processing on higher-order networks: Livin’on the edge... and beyond,'' \emph{Signal Process.}, vol. 187, p. 108149, 2021.

\bibitem{isufi2024topological}
E.~Isufi, G.~Leus, B.~{Beferull-Lozano}, S.~Barbarossa, and P.~Di~Lorenzo, ``Topological signal processing and learning: Recent advances and future challenges,'' \emph{Signal Processing}, vol. 233, p. 109930, 2025.

\bibitem{hayes2009statistical}
M.~H. Hayes, \emph{{Statistical Digital Signal Processing and Modeling}}.\hskip 1em plus 0.5em minus 0.4em\relax John Wiley \& Sons, 2009.

\bibitem{stoica2005spectral}
P.~Stoica and R.~L. Moses, \emph{{Spectral Analysis of Signals}}.\hskip 1em plus 0.5em minus 0.4em\relax Prentice Hall Upper Saddle River, NJ, 2005.

\bibitem{girault2015stationary}
B.~Girault, ``Stationary graph signals using an isometric graph translation,'' in \emph{Eur. Signal Process. Conf. (EUSIPCO)}, 2015, pp. 1516--1520.

\bibitem{girault2015translation}
B.~Girault, P.~Gon{\c{c}}alves, and E.~Fleury, ``Translation on graphs: An isometric shift operator,'' \emph{IEEE Signal Process. Lett.}, vol.~22, no.~12, pp. 2416--2420, 2015.

\bibitem{EPFL16stationary}
N.~Perraudin and P.~Vandergheynst, ``Stationary signal processing on graphs,'' \emph{IEEE Trans. Signal Process.}, vol.~65, no.~13, pp. 3462--3477, 2017.

\bibitem{marques2017stationary}
A.~G. Marques, S.~Segarra, G.~Leus, and A.~Ribeiro, ``Stationary graph processes and spectral estimation,'' \emph{IEEE Trans. Signal Process.}, vol.~65, no.~22, pp. 5911--5926, 2017.

\bibitem{sandrymouraspg_tsp14freq}
A.~Sandryhaila and J.~M.~F. Moura, ``Discrete signal processing on graphs: Frequency analysis,'' \emph{IEEE Trans. Signal Process.}, vol.~62, no.~12, pp. 3042--3054, 2014.

\bibitem{segarra2017optimal}
S.~Segarra, A.~G. Marques, and A.~Ribeiro, ``Optimal graph-filter design and applications to distributed linear network operators,'' \emph{IEEE Trans. Signal Process.}, vol.~65, no.~15, pp. 4117--4131, 2017.

\bibitem{lim2020hodge}
L.-H. Lim, ``Hodge {Laplacians} on graphs,'' \emph{Siam Rev.}, vol.~62, no.~3, pp. 685--715, 2020.

\bibitem{yang2022simplicial}
M.~Yang, E.~Isufi, M.~T. Schaub, and G.~Leus, ``Simplicial convolutional filters,'' \emph{IEEE Trans. Signal Process.}, vol.~70, pp. 4633--4648, 2022.

\bibitem{isufi2022convolutional}
E.~Isufi and M.~Yang, ``Convolutional filtering in simplicial complexes,'' in \emph{IEEE Int. Conf. Acoust., Speech, Signal Process. (ICASSP)}, 2022, pp. 5578--5582.

\bibitem{schaub2018flow}
M.~T. Schaub and S.~Segarra, ``Flow smoothing and denoising: Graph signal processing in the edge-space,'' in \emph{IEEE Global Conf. Signal and Info. Process. (GlobalSIP)}, 2018, pp. 735--739.

\bibitem{jia2019graph}
J.~Jia, M.~T. Schaub, S.~Segarra, and A.~R. Benson, ``Graph-based semi-supervised \& active learning for edge flows,'' in \emph{ACM SIGKDD Int. Conf. Knowl. Discov. Data Min. (KDD)}, 2019, pp. 761--771.

\bibitem{liu2023unrolling}
C.~Liu, G.~Leus, and E.~Isufi, ``Unrolling of simplicial elasticnet for edge flow signal reconstruction,'' \emph{IEEE Open J. Signal Process.}, 2023.

\bibitem{bianconi2021topological}
G.~Bianconi, ``The topological {Dirac} equation of networks and simplicial complexes,'' \emph{J. Physics: Complexity}, vol.~2, no.~3, p. 035022, 2021.

\bibitem{baccini2022weighted}
F.~Baccini, F.~Geraci, and G.~Bianconi, ``Weighted simplicial complexes and their representation power of higher-order network data and topology,'' \emph{Physical Rev. E}, vol. 106, no.~3, p. 034319, 2022.

\bibitem{yang2022simplicialtrend}
M.~Yang and E.~Isufi, ``Simplicial trend filtering,'' in \emph{Asilomar Conf. Signals, Syst., Comput.}\hskip 1em plus 0.5em minus 0.4em\relax IEEE, 2022, pp. 930--934.

\bibitem{yang2022simplicialconvolutional}
M.~Yang, E.~Isufi, and G.~Leus, ``Simplicial convolutional neural networks,'' in \emph{IEEE Int. Conf. Acoust., Speech, Signal Process. (ICASSP)}, 2022, pp. 8847--8851.

\bibitem{battiloro2024generalized}
C.~Battiloro, L.~Testa, L.~Giusti, S.~Sardellitti, P.~Di~Lorenzo, and S.~Barbarossa, ``Generalized simplicial attention neural networks,'' \emph{IEEE Trans. Signal Inf. Process. Netw}, 2024.

\bibitem{krishnan2023simplicial}
J.~Krishnan, R.~Money, B.~Beferull-Lozano, and E.~Isufi, ``Simplicial vector autoregressive model for streaming edge flows,'' in \emph{IEEE Int. Conf. Acoust., Speech, Signal Process. (ICASSP)}, 2023, pp. 1--5.

\bibitem{reddy2024recovery}
S.~Reddy and S.~P. Chepuri, ``Recovery of signals on a simplicial complex from subsampled neighbourhood aggregation,'' \emph{IEEE Signal Process. Lett.}, 2024.

\bibitem{schaub2020random}
M.~T. Schaub, A.~R. Benson, P.~Horn, G.~Lippner, and A.~Jadbabaie, ``Random walks on simplicial complexes and the normalized {Hodge} 1-{Laplacian},'' \emph{SIAM Rev.}, vol.~62, no.~2, pp. 353--391, 2020.

\bibitem{marinucci2025simplicial}
L.~Marinucci, G.~D'Acunto, P.~Di~Lorenzo, and S.~Barbarossa, ``Simplicial {G}aussian models: Representation and inference,'' \emph{arXiv preprint arXiv:2510.12983}, 2025.

\bibitem{yang2025topological}
M.~Yang, ``Topological {S}chr\"odinger bridge matching,'' in \emph{Intl. Conf. on Learning Representations (ICLR)}, 2025.

\bibitem{yang2023hodge}
M.~Yang, V.~Borovitskiy, and E.~Isufi, ``Hodge-compositional edge {G}aussian processes,'' in \emph{Int. Conf. on Artif. Intell. and Stat.}, S.~Dasgupta, S.~Mandt, and Y.~Li, Eds., vol. 238.\hskip 1em plus 0.5em minus 0.4em\relax PMLR, 02--04 May 2024, pp. 3754--3762.

\bibitem{liu2025matched}
C.~Liu, V.~M. Tenorio, A.~G. Marques, and E.~Isufi, ``Matched topological subspace detector,'' \emph{IEEE Trans. Signal Process.}, 2026.

\bibitem{hatcher2002algebraic}
A.~Hatcher, \emph{{Algebraic Topology}}.\hskip 1em plus 0.5em minus 0.4em\relax Cambridge University Press, 2002.

\bibitem{calmon2022higher}
L.~Calmon, M.~T. Schaub, and G.~Bianconi, ``Higher-order signal processing with the {Dirac} operator,'' in \emph{Asilomar Conf. Signals, Syst., Comput.}, 2022, pp. 925--929.

\bibitem{roddenberry2022hodgelets}
T.~M. Roddenberry, F.~Frantzen, M.~T. Schaub, and S.~Segarra, ``Hodgelets: Localized spectral representations of flows on simplicial complexes,'' in \emph{IEEE Int. Conf. Acoust., Speech, Signal Process. (ICASSP)}.\hskip 1em plus 0.5em minus 0.4em\relax IEEE, 2022, pp. 5922--5926.

\bibitem{candes2015phase}
E.~J. Candes, X.~Li, and M.~Soltanolkotabi, ``Phase retrieval via {Wirtinger} flow: Theory and algorithms,'' \emph{IEEE Trans. Inf. Theory}, vol.~61, no.~4, pp. 1985--2007, 2015.

\bibitem{shechtman2015phase}
Y.~Shechtman, Y.~C. Eldar, O.~Cohen, H.~N. Chapman, J.~Miao, and M.~Segev, ``Phase retrieval with application to optical imaging: {A} contemporary overview,'' \emph{IEEE Signal Process. Mag.}, vol.~32, no.~3, pp. 87--109, 2015.

\bibitem{barabasi2013network}
A.-L. Barab{\'a}si, ``Network science,'' \emph{Philos. Trans. R. Soc. A}, vol. 371, no. 1987, p. 20120375, 2013.

\bibitem{kay1998fundamentals}
S.~Kay, \emph{{Fundamentals of Statistical Signal Processing: Detection Theory}}.\hskip 1em plus 0.5em minus 0.4em\relax Prentice-Hall, 1998.

\bibitem{roddenberry2022signal}
T.~M. Roddenberry, M.~T. Schaub, and M.~Hajij, ``Signal processing on cell complexes,'' in \emph{IEEE Int. Conf. Acoust., Speech, Signal Process. (ICASSP)}.\hskip 1em plus 0.5em minus 0.4em\relax IEEE, 2022, pp. 8852--8856.

\bibitem{sardellitti2024topological}
S.~Sardellitti and S.~Barbarossa, ``Topological signal processing over generalized cell complexes,'' \emph{IEEE Trans. Signal Process.}, 2024.

\bibitem{dong2016learning}
X.~Dong, D.~Thanou, P.~Frossard, and P.~Vandergheynst, ``Learning {Laplacian} matrix in smooth graph signal representations,'' \emph{IEEE Trans. Signal Process.}, vol.~64, no.~23, pp. 6160--6173, 2016.

\bibitem{segarra2017network}
S.~Segarra, A.~G. Marques, G.~Mateos, and A.~Ribeiro, ``Network topology inference from spectral templates,'' \emph{IEEE Trans. Signal Inf. Process. Netw}, vol.~3, no.~3, pp. 467--483, 2017.

\end{thebibliography}

\end{document}